\patchcmd{\thmhead}{(#3)}{#3}{}{}
\DeclareMathOperator{\supp}{supp} 
\DeclareMathOperator{\RM}{RM}
\DeclareMathOperator{\GRM}{GRM}
\DeclareMathOperator{\wt}{wt}
\DeclareMathOperator{\RS}{RS}
\newcommand{\F}{{\mathbb{F}}}
\newcommand{\fq}{\mathbb{F}_q}
\newcommand{\B}{{\mathcal{B}}}
\DeclarePairedDelimiter\abs{\lvert}{\rvert}%
\DeclarePairedDelimiter\norm{\lVert}{\rVert}%
\let\oldabs\abs
\def\abs{\@ifstar{\oldabs}{\oldabs*}}
\let\oldnorm\norm
\def\norm{\@ifstar{\oldnorm}{\oldnorm*}}
\newtheorem{thm}{Theorem}[section]
\newtheorem{prop}[thm]{Proposition}
\newtheorem{cor}[thm]{Corollary}
\newtheorem{lem}[thm]{Lemma}
\theoremstyle{definition}
\newtheorem{defn}[thm]{Definition} 
\newtheorem{rem}[thm]{Remark} 
\newtheorem{ex}[thm]{Example}
\title[About the generalized Hamming weights of matrix-product codes]{About the generalized Hamming weights of matrix-product codes}
\author{Rodrigo San-José}
\curraddr{
\texttt{Rodrigo San-José:} IMUVA-Mathematics Research Institute, Universidad de Valladolid, 47011 Valladolid (Spain).
}
\email{rodrigo.san-jose@uva.es}
\thanks{This work was supported in part by the following grants: Grant TED2021-130358B-I00 funded by MICIU/AEI/10.13039/501100011033 and by the ``European Union NextGenerationEU/PRTR'', Grant PID2022-138906NB-C21 funded by MICIU/AEI/10.13039/501100011033 and by ERDF/EU, and FPU20/01311 funded by the Spanish Ministry of Universities.}
\subjclass[2020]{Primary: 94B05. Secondary: 94B65, 11T71}
\keywords{Linear codes, Matrix-product codes, Generalized Hamming weights, Reed-Solomon codes}
\begin{document}

\maketitle

\begin{abstract}
We derive a general lower bound for the generalized Hamming weights of nested matrix-product codes, with a particular emphasis on the cases with two and three constituent codes. We also provide an upper bound which is reminiscent of the bounds used for the minimum distance of matrix-product codes. When the constituent codes are two Reed-Solomon codes, we obtain an explicit formula for the generalized Hamming weights of the resulting matrix-product code. We also deal with the non-nested case for the case of two constituent codes.
\end{abstract}

\section{Introduction}

The generalized Hamming weights (GHWs) of a linear code were introduced by Wei in \cite{weiGHW}, and they are a generalization of the minimum distance. Indeed, the GHWs of a code are obtained as the minimum of the cardinalities of the supports of all its subcodes of a fixed dimension $r$, e.g., for $r=1$ one obtains the minimum distance. They give finer information about the code, and, in terms of applications, they characterize its performance on the wire-tap channel of type II and as a $t$-resilient function \cite{weiGHW}, they have applications to list decoding \cite{guruswammiGHWlistdecoding,guruswammiGHWlistdecodingTensorInterleaved}, their relative version has applications to secret sharing \cite{matsumotoRGHW}, and the rank-metric version has applications to network coding \cite{umbertoGHWandGRW,existenceGRW,matsumotoRGRW}. This has motivated the study of GHWs in general \cite{hellesethGHWLinearcodes}, as well as the computation of these parameters for well known families of codes, such as cyclic codes\cite{janwaGHWCyclic,yangGHWCyclic,fengGHWsCyclic} (also see \cite{hellesethGHWcyclic}), Reed-Muller codes \cite{pellikaanGHWRM}, Cartesian codes \cite{beelenGHWcartesian}, hyperbolic codes \cite{eduardoGHWHyperbolic}, and algebraic geometry codes \cite{munueraGHWhermitica,munueraGHWGoppa,sanjoseGHWNT}, among others. Nevertheless, the computation of the GHWs of a code is, in general, a difficult problem, and they are still unknown for many families of codes. 

Matrix-product codes (MPCs) were introduced by Blackmore and Norton in \cite{blackmoreMPC}. They have received a lot of attention since then \cite{aschMPC,fanMPC,liuMPChomogeneous1,liuMPChomogeneous2}, and they have found applications in many different contexts \cite{LuoSymbolpairMPC,galindoQuantumMPC,galindoMPCLRC,luoMPCLRC}. This technique utilizes an $s\times h$ matrix $A$ and $s$ linear codes $C_1,\dots,C_s$ of length $n$, and provides a new code of length $nh$ (see Definition \ref{d:mpc}). From the properties of the constituent codes, one can derive properties of the corresponding MPC. Most notably, one can obtain a lower bound for the minimum distance of the MPC from the minimum distance of the constituent codes \cite{blackmoreMPC,ferruhMPC}, but one can also derive self-orthogonality properties for some matrices \cite{galindoQuantumMPC,jitmanSelforthogonalMPCEU,jitmanSelforthogonalMPCHerm} or decoding algorithms \cite{decodingMPC,hernandoDecodingMPC2,hernandoListDecodingMPC}.

The aim of this work is to study the GHWs of an MPC in terms of those of its constituent codes. By doing this, one can consider families of codes with known GHWs, and derive different codes with bounded GHWs using the MPC construction. This allows us to substantially expand the families of codes for which we have bounds for their GHWs. This work can also be seen as a generalization of the bounds given for the minimum distance in \cite{blackmoreMPC,ferruhMPC}. In Section \ref{S:2codigos}, we focus on the case of $2\times 2$ matrices, without requiring the constituent codes to be nested. In Theorem \ref{T:doscodigos}, we give a lower bound for the GHWs of the corresponding MPC in terms of the GHWs of the constituent codes, and their sum and intersection. For the minimum distance of the code, this provides a refinement of the usual bounds for the $(u,u+v)$ and $(u+v,u-v)$ constructions (see \cite[Thm. 2.1.32 \& Prop. 2.1.39]{pellikaanlibro}), which is showcased in Example \ref{ex:uuv}. In Section \ref{S:generalbound}, by requiring the constituent codes to be nested, we generalize the techniques from Section \ref{S:2codigos} to obtain a lower bound for the GHWs of an MPC for an arbitrary non-singular by columns (NSC) matrix, and, in Subsections \ref{ss:h2} and \ref{ss:h3}, we describe it explicitly for the cases of two (Corollary \ref{C:h2}) and three (Theorem \ref{T:h3}) constituent codes. To complement these lower bounds, in Section \ref{S:upperbound} we provide an upper bound for the GHWs of MPCs, whose expression is reminiscent of the bounds obtained for the minimum distance in \cite{blackmoreMPC,ferruhMPC}. In Section \ref{S:examples}, we apply our results for specific families of codes. In particular, we show that our bounds are sharp when we consider two Reed-Solomon codes and a $2\times 2$ NSC matrix (Theorem \ref{T:GHWsRS}), therefore obtaining the weight hierarchy of these types of codes. We also test the bounds given in Corollary \ref{C:h2} and Theorem \ref{T:h3} for the case of two and three constituent Reed-Muller codes, and they give the true values of the GHWs in all the cases we have checked. 

\section{Preliminaries}

Let $\F_q$ be the finite field of $q$ elements, where $q$ is a power of a prime $p$. We start by defining MPCs as in \cite{blackmoreMPC}. 

\begin{defn}\label{d:mpc}
Let $C_1,\dots,C_s \subset \F_q^n$ be linear codes of length $n$, which we call \textit{constituent codes}, and let $A=(a_{ij})\in \F_q^{s\times h}$ be an $s\times h$ matrix, with $s\leq h$. Given $v_\ell\in C_\ell$, for $\ell=1,\dots,s$, we define
\begin{equation}\label{eq:codeword}
[v_1,\dots,v_s]\cdot A=\left(\sum_{\ell=1}^s a_{\ell 1}v_\ell,\dots, \sum_{\ell=1}^s a_{\ell h}v_\ell \right) \in \F_q^{nh}.
\end{equation}
Then the \textit{matrix-product code} $C$ associated to $A$ and $C_1,\dots,C_s$ is
$$
C=[C_1,\dots,C_s]\cdot A:= \left\{ [v_1,\dots,v_s]\cdot A : v_\ell\in C_\ell,\; \ell=1,\dots,s \right\} \subset \fq^{nh}.
$$
For each vector $c\in C$, we have a natural subdivision of the coordinates in $h$ blocks of length $n$, i.e., 
$$
c=(c_1,c_2,\dots,c_h),\; c_i\in \F_q^n, \; i=1\dots,h.
$$
\end{defn}

\begin{ex}
One can recover the usual $(u,u+v)$ construction (sometimes called Plotkin sum) of the codes $C_1$ and $C_2$ as an MPC code as follows:
$$
[C_1,C_2]\cdot \begin{pmatrix}
1&1\\
0&1
\end{pmatrix}=\{(v_1,v_1+v_2) : v_1\in C_1,\; v_2\in C_2\}.
$$    
\end{ex}

\begin{defn}\label{d:short}
We denote by $e_i$, $1\leq i \leq h$, the standard vectors of $\mathbb{Z}_2^h$, i.e., the vectors whose only nonzero entry is equal to 1 and it is in the $i$-th position. Let $y \in \mathbb{Z}_2^h$. Then we define
$$
C(y):=\{ c\in C : c_i=0 \text{ for each } i\in \supp(y) \}.
$$
In other words, $C(y)$ is similar to a shortening at the blocks given by $\supp(y)$, but without puncturing those coordinates. 
\end{defn}

Note that we are using subindices for vectors to express different things: to stress that a vector $v_\ell$ belongs to $C_\ell$, to denote the $i$-th block $c_i$ of a codeword $c\in C$, and to denote the standard vectors $e_i$ of $\mathbb{Z}_2^h$. We will use different letters ($v$, $c$ and $e$) and subindices ($i$ and $\ell$), which, together with the context, will help to clear any possible confusion. 

With respect to the parameters of MPCs, it is clear that the length is $nh$, and the dimension is $k=k_1+\dots+k_s$, where $k_\ell=\dim C_\ell$, $1\leq \ell \leq s$, if $A$ has full rank. In what follows, we always assume that $A$ has full rank. For the minimum distance, we have to introduce some notation. Let us denote by $R_\ell=(a_{\ell 1},\dots,a_{\ell h})$ the element of $\F_q^h$ given by the $\ell$-th row of $A$, for $1\leq \ell \leq s$. We denote by $\delta_\ell$ the minimum distance of the code $C_{R_\ell}$ generated by $\{ R_1,\dots,R_\ell\} $ in $\F_q^h$. In \cite{ferruhMPC} it is proven that
\begin{equation}\label{eq:cotadmin}
d_1(C)\geq \min \{d_1(C_1) \delta_1,\dots,d_1(C_s)\delta_s \} ,
\end{equation}
where $d_1(D)$ denotes the minimum distance the code $D$. Moreover, in \cite{decodingMPC}, the authors prove that the previous bound is sharp if $C_s\subset \cdots \subset C_1$.

When working with MPCs, it is usual to consider the following condition, introduced in \cite{blackmoreMPC}. 

\begin{defn}
Let $A$ be an $s\times h$ matrix, and let $A_t$ be the matrix formed by the first $t$ rows of $A$. For $1\leq j_i<\cdots < j_t\leq h$, we denote by $A(j_1,\dots,j_t)$ the $t\times t$ matrix consisting of the columns $j_1,\dots,j_t$ of $A_t$. A matrix $A$ is \textit{non-singular by columns} if $A(j_1,\dots,j_t)$ is non-singular for each $1\leq t \leq s$ and $1\leq j_1<\cdots < j_t\leq h$. In particular, an NSC matrix has full rank.
\end{defn}

\begin{ex}\label{ex:vandermonde}
Let $\F_q=\{\beta_1,\dots,\beta_q\}$. For $1\leq s\leq q$, the Vandermonde matrix
$$
V_m=\begin{pmatrix}
    1&\cdots&1\\
    \beta_1&\cdots & \beta_q\\
    \vdots &\ddots & \vdots \\
    \beta_1^{s-1}&\cdots & \beta_q^{s-1}
\end{pmatrix}
$$
is an NSC matrix. Also $V_M(j_1,\dots,j_h)$ is NSC for any $s\leq h\leq q$ and $1\leq j_1<\cdots<j_h\leq q$. 
\end{ex}

In \cite{blackmoreMPC} it is shown that, if $A$ is NSC, then the codes $C_{R_\ell}$ are MDS (i.e., $\delta_\ell=h-\ell+1$), for $1\leq \ell \leq s$. This implies that the bound (\ref{eq:cotadmin}) becomes
\begin{equation}\label{eq:cotadminNSC}
d_1(C)\geq \min \{h d_1(C_1) ,(h-1)d_1(C_2),\dots,(h-s+1)d_1(C_s) \} 
\end{equation}
for the case of an NSC matrix.

One of the goals of this work is to generalize the bounds (\ref{eq:cotadmin}) and (\ref{eq:cotadminNSC}) to the case of the GHWs of $C$, which we introduce now. Let $D\subset C$ be a subcode. The support of $D$, denoted by $\supp(D)$, is defined as
$$
\supp(D):=\{i: \exists\ u=(u_1,\dots,u_{nh})\in D,\; u_i\neq 0 \}.
$$
Note that, in this case, $u_i$ is just the $i$-th coordinate of $u$, not the $i$-th block of length $n$ of $u$. Let $1\leq r \leq \dim C$. The $r$-th \textit{generalized Hamming weight} of $C$, denoted by $d_r(C)$, is defined as
$$
d_r(C):=\min \{ \abs{\supp(D)}: D \text{ is a subcode of $C$ with } \dim D=r\},
$$
where $\abs{A}$ denotes the cardinality of a set $A$. Throughout the paper, we will denote $d_0(C)=0$. 

\begin{rem}\label{r:baseGHW}
Given a basis $B=\{b_1,\dots,b_k\}$ for a subcode $D$, we have that
$$
\supp(D)=\bigcup_{i=1}^k \supp(b_i).
$$
\end{rem}

The GHWs satisfy the following general properties for any linear code $C$, as shown in \cite{weiGHW}.

\begin{thm}[(Monotonicity)]\label{T:monotonia}
For an $[n,k]$ linear code $C$ with $k>0$ we have
$$
1\leq d_1(C)<d_2(C)<\cdots <d_k(C)\leq n.
$$
\end{thm}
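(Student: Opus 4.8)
The plan is to split the statement into its two easy endpoint inequalities and the core strict monotonicity $d_{r-1}(C)<d_r(C)$. The endpoints are immediate from the definitions: any one-dimensional subcode is spanned by a single nonzero vector, whose support is nonempty, so $d_1(C)\geq 1$; and the support of any subcode is a subset of $\{1,\dots,n\}$, so in particular $d_k(C)\leq n$. All the content is therefore in the strict inequalities.

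For these, fix $r$ with $2\leq r\leq k$ and let $D\subseteq C$ be an $r$-dimensional subcode realizing the minimum, i.e. $\abs{\supp(D)}=d_r(C)$. The idea is to manufacture from $D$ a subcode of dimension $r-1$ whose support is strictly smaller. To do this, I would pick any coordinate $i\in\supp(D)$ and consider the coordinate functional $\varphi_i\colon D\to\F_q$, $u\mapsto u_i$. Since $i\in\supp(D)$, there is some $u\in D$ with $u_i\neq 0$, so $\varphi_i$ is surjective and its kernel $D':=\{u\in D : u_i=0\}$ has dimension $r-1$.

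The key observation is then that $\supp(D')\subseteq\supp(D)\setminus\{i\}$: clearly $\supp(D')\subseteq\supp(D)$, while $i\notin\supp(D')$ because every element of $D'$ vanishes in the $i$-th coordinate by construction. Hence $\abs{\supp(D')}\leq\abs{\supp(D)}-1=d_r(C)-1$. Since $D'$ is an $(r-1)$-dimensional subcode, the definition of the GHWs gives $d_{r-1}(C)\leq\abs{\supp(D')}\leq d_r(C)-1<d_r(C)$, which is exactly the desired strict inequality. Chaining this for $r=2,\dots,k$ and combining with the two endpoint bounds yields the full chain $1\leq d_1(C)<\cdots<d_k(C)\leq n$.

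I do not expect a genuine obstacle here; the only subtlety worth flagging is that the support of a subcode is the union of the supports of a basis (Remark \ref{r:baseGHW}), so one cannot remove a coordinate from $\supp(D)$ merely by adjusting a single basis vector — one must pass to the entire subspace $D'=\ker\varphi_i$. Taking the kernel of the coordinate functional, rather than modifying a basis by hand, is the clean way to guarantee $i\notin\supp(D')$ for \emph{every} element of $D'$ simultaneously, and this is the step that makes the argument work.
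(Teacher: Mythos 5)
Your proof is correct, and it is precisely the standard argument: the paper states this result without proof, citing Wei's original article, and the kernel-of-a-coordinate-functional construction you use (take a minimizing $r$-dimensional subcode $D$, a position $i\in\supp(D)$, and pass to $D'=\{u\in D: u_i=0\}$ of dimension $r-1$ with strictly smaller support) is exactly the proof given there. No gaps; the point you flag about working with the whole kernel rather than adjusting a basis vector is indeed the only subtlety.
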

\begin{cor}[(Generalized Singleton Bound)]\label{C:singletongeneralizada}
For an $[n,k]$ linear code $C$ we have
$$
d_r(C)\leq n-k+r, \; 1\leq r\leq k.
$$
\end{cor}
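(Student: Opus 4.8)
The plan is to derive this immediately from the Monotonicity theorem (Theorem \ref{T:monotonia}), which is why it is stated as a corollary. The essential observation is that the generalized Hamming weights form a chain of \emph{strictly} increasing integers, so each step up in the index $r$ forces an increase of at least $1$ in the weight. Anchoring this chain at the top, where $d_k(C)\leq n$, and counting downward to index $r$ will yield the claimed bound.

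Concretely, I would fix $r$ with $1\leq r\leq k$ and apply the strict inequalities
$$
d_r(C)<d_{r+1}(C)<\cdots<d_k(C)
$$
from Theorem \ref{T:monotonia}. Since all these quantities are integers, each strict inequality $d_j(C)<d_{j+1}(C)$ gives $d_{j+1}(C)\geq d_j(C)+1$. Chaining these $k-r$ consecutive inequalities (or, equivalently, summing them) produces
$$
d_k(C)\geq d_r(C)+(k-r).
$$
Combining this with the upper bound $d_k(C)\leq n$, also furnished by Theorem \ref{T:monotonia}, I would conclude
$$
d_r(C)\leq d_k(C)-(k-r)\leq n-(k-r)=n-k+r,
$$
which is exactly the asserted inequality.

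There is no real obstacle here: the entire argument rests on the strictness of the inequalities in the monotonicity statement, since strictness together with integrality is what guarantees a gap of at least one at every step. The only point worth stating carefully is that we are counting $k-r$ steps from index $r$ to index $k$; once that is pinned down, the bound is immediate. Thus the proof is a short deduction rather than a construction, and the generalized Singleton bound is recovered as a formal consequence of the weight hierarchy being a strictly increasing sequence bounded above by $n$.
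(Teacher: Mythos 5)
Your argument is correct and is exactly the standard derivation implied by the paper, which states this as a corollary of Theorem \ref{T:monotonia} (citing Wei) without writing out the chaining: the strict integer inequalities $d_r(C)<\cdots<d_k(C)\leq n$ give $d_r(C)\leq n-(k-r)$. Nothing is missing.
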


We say that a code $C$ is $t$-MDS if $d_t(C)=n-k+t$, for some $1\leq t \leq \dim C$. If a code is $t$-MDS for $t<\dim C$, it is also $(t+1)$-MDS by Theorem \ref{T:monotonia} and Corollary \ref{C:singletongeneralizada}. Thus, one usually studies what is the first $t$ such that $C$ is $t$-MDS.

\begin{rem}\label{r:ghwMDS}
For an MDS code $C$, by Theorem \ref{T:monotonia} and Corollary \ref{C:singletongeneralizada} we have
$$
d_r(C)=n-k+r,
$$
for all $1\leq r \leq k$.
\end{rem}

Going back to MPCs, the block structure that we have allows us to divide the support of the code as follows.

\begin{defn}
Let $C\subset \F_q^{nh}$. Then we define
$$
\supp_i(C):=\supp(C)\cap \{(i-1)\cdot n +1,\dots,i\cdot n \},\; 1\leq i \leq h.
$$
It is clear that 
$$
\supp(C)=\bigsqcup_{i=1}^h \supp_i(C).
$$
This implies
\begin{equation}\label{eq:sumasupport}
\abs{\supp(C)}=\sum_{i=1}^h \abs{\supp_i(C)}.
\end{equation}
\end{defn}

\section{A bound for the GHWs of the MPCs with $2\times 2$ matrices}\label{S:2codigos}
In this section, we give a lower bound for the GHWs of MPCs obtained with a $2\times 2$ matrix A, which we also assume to be NSC. If we denote
$$
A=\begin{pmatrix}
    a_{11} & a_{12} \\
    a_{21} & a_{22}
\end{pmatrix},
$$
since $A$ is NSC, we have $a_{1j}\neq 0$, $1\leq j\leq 2$. Moreover, we also cannot have $a_{21}=a_{22}=0$. Since exchanging the order of the columns of $A$ produces a permutation equivalent MPC code, we will assume that $a_{22}\neq 0$. Let $C_1,C_2\subset \fq^n$, and $1\leq r \leq \dim C_1+\dim C_2$. We denote $[r]:=\{0,1\dots,r\}$ and $[r]^2:=[r]\times [r]$. We consider the set
$$
    Y_2 := \left\{(\alpha_1,\alpha_2) \in [r]^2: \begin{array}{c}
        r-\dim(C_1+C_2) \leq  \alpha_1 \leq \dim C_2\\
        r-\dim(C_1+C_2) \leq  \alpha_2 \leq \dim (C_1\cap C_2)\\
        \alpha_1+\alpha_2\leq r
        \end{array}\right\}.
$$

We give now the main result of the section, bounding from below the GHWs of an MPC in terms of the GHWs of sums and intersections of the constituent codes.

\begin{thm}\label{T:doscodigos}
Let $C_1,C_2\subset \F_q^n$, and let $C=[C_1,C_2]\cdot A$, with $A$ as above. Let $1\leq r\leq \dim C=\dim C_1+\dim C_2$. Then
$$
d_r(C)\geq \min_{(\alpha_1,\alpha_2)\in Y_2}B_{\alpha_1,\alpha_2},
$$
where 
$$
B_{\alpha_1,\alpha_2}= \max\{d_{r-\alpha_1}(C_1+C_2),d_{\alpha_2}(C_1\cap C_2)\}+ \max\{d_{r-\alpha_2}(C_1+C_2),d_{\alpha_1}(C_2)\}. 
$$
\end{thm}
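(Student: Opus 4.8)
The plan is to fix a subcode $D\subseteq C$ with $\dim D=r$ and to bound $\abs{\supp(D)}=\abs{\supp_1(D)}+\abs{\supp_2(D)}$ below, via (\ref{eq:sumasupport}), by analysing the two block projections of $D$ separately. Write each codeword of $C$ as $[v_1,v_2]\cdot A=(a_{11}v_1+a_{21}v_2,\ a_{12}v_1+a_{22}v_2)$ with $v_1\in C_1$, $v_2\in C_2$; since $A$ is NSC, hence invertible, this representation is unique. For $i=1,2$ let $\phi_i\colon D\to \F_q^n$ send a codeword to its $i$-th block, and set $D^{(i)}=\phi_i(D)$. Because $a_{1j}\neq 0$ for both $j$ and $a_{22}\neq 0$, every block of a codeword is an $\F_q$-combination of $v_1$ and $v_2$, so $D^{(1)},D^{(2)}\subseteq C_1+C_2$. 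As $\supp_i(D)=\supp(D^{(i)})$, the definition of the GHW gives $\abs{\supp_i(D)}\geq d_{\dim D^{(i)}}(C_1+C_2)$.

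Set $\alpha_1=\dim\ker\phi_1$ and $\alpha_2=\dim\ker\phi_2$, so $\dim D^{(1)}=r-\alpha_1$ and $\dim D^{(2)}=r-\alpha_2$. The key step is to analyse the kernels. The restriction of $\phi_2$ to $\ker\phi_1$ is injective, since a codeword with zero first block is determined by its second block, so it identifies $\ker\phi_1$ with a subcode $K_1\subseteq D^{(2)}$ of dimension $\alpha_1$ satisfying $\supp(K_1)\subseteq\supp_2(D)$. For $c\in\ker\phi_1$ we have $a_{11}v_1+a_{21}v_2=0$; whether or not $a_{21}=0$, this forces the second block $c_2$ to be a scalar multiple of a vector of $C_2$, so $K_1\subseteq C_2$. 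Symmetrically, using $a_{22}\neq 0$ and $a_{12}\neq 0$, the relation $a_{12}v_1+a_{22}v_2=0$ for $c\in\ker\phi_2$ forces $v_1$ to lie in $C_1\cap C_2$ and the first block $c_1$ to be a scalar multiple of it; via $\phi_1$ this identifies $\ker\phi_2$ with a subcode $K_2\subseteq D^{(1)}$ of dimension $\alpha_2$ with $\supp(K_2)\subseteq\supp_1(D)$ and $K_2\subseteq C_1\cap C_2$. Applying the GHW bound to $K_1$ and $K_2$ yields $\abs{\supp_2(D)}\geq d_{\alpha_1}(C_2)$ and $\abs{\supp_1(D)}\geq d_{\alpha_2}(C_1\cap C_2)$.

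Combining both estimates for each block gives
$$
\abs{\supp_1(D)}\geq \max\{d_{r-\alpha_1}(C_1+C_2),d_{\alpha_2}(C_1\cap C_2)\},\quad \abs{\supp_2(D)}\geq \max\{d_{r-\alpha_2}(C_1+C_2),d_{\alpha_1}(C_2)\},
$$
so that $\abs{\supp(D)}\geq B_{\alpha_1,\alpha_2}$. It then remains to verify $(\alpha_1,\alpha_2)\in Y_2$: the bounds $\dim D^{(i)}\leq\dim(C_1+C_2)$ give $\alpha_i\geq r-\dim(C_1+C_2)$; the inclusions $K_1\subseteq C_2$ and $K_2\subseteq C_1\cap C_2$ give $\alpha_1\leq\dim C_2$ and $\alpha_2\leq\dim(C_1\cap C_2)$; and since $\ker\phi_1\cap\ker\phi_2=\{0\}$, because a codeword with both blocks zero is zero, we obtain $\alpha_1+\alpha_2\leq\dim D=r$. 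Hence $\abs{\supp(D)}\geq\min_{(\alpha_1,\alpha_2)\in Y_2}B_{\alpha_1,\alpha_2}$, and taking the minimum over all subcodes $D$ of dimension $r$ proves the claim.

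The main obstacle is the kernel analysis establishing $K_1\subseteq C_2$ and $K_2\subseteq C_1\cap C_2$: this is precisely where the NSC hypothesis is used, through the invertibility of $A$ (so that $\det A\neq 0$) and the nonvanishing of $a_{12}$, $a_{22}$, which force the surviving block of a kernel element to be a scalar multiple of a vector lying in the appropriate code. The remaining verification that the pair of kernel dimensions lands in $Y_2$ is then routine once these inclusions and the trivial kernel intersection are in hand.
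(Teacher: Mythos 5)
Your proof is correct and follows essentially the same route as the paper: your $\ker\phi_1,\ker\phi_2$ are exactly the paper's $D(e_1),D(e_2)$, your image bound $\abs{\supp_i(D)}\geq d_{r-\alpha_i}(C_1+C_2)$ is its step (a), and your kernel analysis showing $K_1\subseteq C_2$ and $K_2\subseteq C_1\cap C_2$ (via $c_2=(\det A/a_{11})v_2$ and $c_1=(\det A/a_{22})v_1$) is its step (b). The only difference is bookkeeping: you use rank--nullity on the block projections where the paper uses the explicit decomposition $D=D(e_1)\oplus D(e_2)\oplus D_3$ and bases, and both verifications that $(\alpha_1,\alpha_2)\in Y_2$ coincide.
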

\begin{proof}
Let $D\subset C$ be a subcode with $\dim D=r$. We will associate a pair $(\alpha_1,\alpha_2)$ to $D$, and we will see that 
$$
\abs{\supp(D)}\geq B_{\alpha_1,\alpha_2}.
$$
We consider the following subcodes of $D$ (recall Definition \ref{d:short}):
$$
D_1=D(e_1),\; D_2=D(e_2), \text{ and } D_3=D/(D(e_1)+D(e_2)),
$$
where $D_3$ is regarded as a subcode of $D$ by fixing some set of representatives of the quotient vector space. It is clear that 
$$
D=D_1\oplus D_2\oplus D_3.
$$
If we denote $\alpha_1=\dim D_1$ and $\alpha_2=\dim D_2$, we have that $\dim D_3=r-\alpha_1-\alpha_2\geq 0$. Moreover, by (\ref{eq:sumasupport}), we have 
$$
\abs{\supp(D)}=\sum_{i=1}^2 \abs{\supp_i(D)}.
$$
Now we will bound $\abs{\supp_i(D)}$ from below, for $1\leq i \leq 2$. We start with $i=1$. Let $\B_1$, $\B_2$ and $\B_3$ be any fixed bases for $D_1$, $D_2$, and $D_3$, respectively. We consider the basis $\B=\B_1\cup \B_2\cup \B_3$ for $D$. We can use Remark \ref{r:baseGHW}, and notice that
$$
\supp_1(D_1)=\bigcup_{b\in \B_1}\supp_i(b)=\emptyset. 
$$
Therefore, $\supp_1(D)=\supp_1(D_2 \oplus D_3)$. Now we have two ways to bound $\abs{\supp_1(D_2\oplus D_3)}$:
\begin{enumerate}
    \item[(a)] Let $\B'$ be the set 
    $$
    \B':=\{c_1: c=(c_1,c_2)\in \B_2\cup \B_3\},
    $$
    that is, the set formed by the first block of the vectors in $\B_2\cup \B_3$, which has size $r-\alpha_1$. From the definition of MPCs (see (\ref{eq:codeword})), $\B'\subset C_1+C_2$. Moreover, $\B'$ is a linearly independent set because, otherwise, we would have a linear combination of vectors of $\B_2\cup \B_3$ in $D_1$, a contradiction. Thus,
    $$
    \abs{\supp_1(D)}=\abs{\supp_1(D_2\oplus D_3)}\geq d_{r-\alpha_1}(C_1+C_2).
    $$
    \item[(b)] We consider the set
    $$
    \B''=\{c_1: c=(c_1,c_2)\in \B_2\}.
    $$
    As the vectors of $\B_2$ are linearly independent and they have $c_2=0$, the vectors in $\B''$ are linearly independent. Let $c_1\in \B''$. Then
    $$
    (c_1,0)=[v_1,v_2]\cdot A=(a_{11}v_1+a_{21}v_2,a_{12}v_1+a_{22}v_2),
    $$    
    for some $v_1\in C_1$, $v_2\in C_2$. Hence,
    $$
    0=a_{12}v_1+a_{22}v_2 \implies v_1=(-a_{22}/a_{12}) v_2,
    $$
    since $a_{12}\neq 0$. We are assuming $a_{22}\neq 0$, which implies $v_1,v_2\in C_1\cap C_2$. Therefore, $c_1=a_{11}v_1+a_{21}v_2\in C_1\cap C_2$ and $\B''\subset C_1\cap C_2$. We have obtained
    $$
    \abs{\supp_1(D)}=\abs{\supp_1(D_2\oplus D_3)}\geq d_{\alpha_2}(C_1\cap C_2).
    $$
\end{enumerate}
Using both bounds, we get
$$
\abs{\supp_1(D)} \geq \max \{ d_{r-\alpha_1}(C_1+C_2),d_{\alpha_2}(C_1\cap C_2) \}.
$$
An analogous argument applies to $\supp_2(D)$, taking into account that $a_{21}$ can be zero. This means that in (b) we can only argue that $v_1,v_2\in C_2$. We obtain the bound
$$
\abs{\supp_2(D)} \geq \max \{ d_{r-\alpha_2}(C_1+C_2),d_{\alpha_1}(C_2) \}.
$$
Thus, 
$$
\abs{\supp(D)}=\abs{\supp_1(D)}+\abs{\supp_2(D)}\geq  B_{\alpha_1,\alpha_2}.
$$
For any subcode $D$, from the arguments in (a) and (b) we deduce that the parameters $\alpha_1=\dim D(e_1)$ and $\alpha_2=\dim D(e_2)$ satisfy $(\alpha_1,\alpha_2)\in Y$, which concludes the proof. 
\end{proof}

We have given the bound in the most general form. However, depending on whether $a_{21}$ is zero or not, it is possible to improve the bound from the previous result, as we show next. If $a_{21}\neq 0$, consider the set
$$
Y_2' := \left\{(\alpha_1,\alpha_2) \in [r]^2: \begin{array}{c}
        r-\dim(C_1+C_2)\leq  \alpha_i \leq \dim C_1\cap C_2, \; i=1,2  \\
        \alpha_1+\alpha_2\leq r
        \end{array}\right\}.
$$

\begin{cor}\label{C:differentboundsh2no0}
With the notation as before, if $a_{21}\neq 0$, then
$$
d_r(C)\geq \min_{(\alpha_1,\alpha_2)\in Y_2'}B_{\alpha_1,\alpha_2},
$$
where 
$$
B_{\alpha_1,\alpha_2}= \max\{d_{r-\alpha_1}(C_1+C_2),d_{\alpha_2}(C_1\cap C_2)\}+ \max\{d_{r-\alpha_2}(C_1+C_2),d_{\alpha_1}(C_1\cap C_2)\}. 
$$
\end{cor}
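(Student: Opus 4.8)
The plan is to rerun the proof of Theorem~\ref{T:doscodigos} almost verbatim, tracking the single place where the stronger hypothesis $a_{21}\neq 0$ (rather than possibly zero) sharpens the conclusion. Concretely, I would fix a subcode $D\subset C$ with $\dim D=r$, form the same three subcodes $D_1=D(e_1)$, $D_2=D(e_2)$ and $D_3=D/(D(e_1)+D(e_2))$, set $\alpha_1=\dim D_1$ and $\alpha_2=\dim D_2$ (so $\dim D_3=r-\alpha_1-\alpha_2\geq 0$), choose bases $\B_1,\B_2,\B_3$, and split $\abs{\supp(D)}=\abs{\supp_1(D)}+\abs{\supp_2(D)}$ exactly as before.

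For the first block nothing changes: the estimate $\abs{\supp_1(D)}\geq\max\{d_{r-\alpha_1}(C_1+C_2),d_{\alpha_2}(C_1\cap C_2)\}$ only used $a_{12}\neq 0$ and $a_{22}\neq 0$, both of which still hold, so the first summand of $B_{\alpha_1,\alpha_2}$ is unchanged and I would reproduce that argument word for word. The one step to revisit is bound~(b) for the second block. There one takes $\B''=\{c_2 : c=(c_1,c_2)\in \B_1\}$; since $\B_1\subset D(e_1)$, each such $c$ has $c_1=0$, so writing $c=[v_1,v_2]\cdot A$ gives $a_{11}v_1+a_{21}v_2=0$. In the theorem $a_{21}$ could vanish, forcing only $v_2\in C_2$; now $a_{21}\neq 0$ together with $a_{11}\neq 0$ yields $v_1=-(a_{21}/a_{11})v_2$, a nonzero scalar multiple, so that $v_1,v_2\in C_1\cap C_2$ and hence $c_2=a_{12}v_1+a_{22}v_2\in C_1\cap C_2$. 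Thus $\B''\subset C_1\cap C_2$ is a linearly independent set of size $\alpha_1$, and $\abs{\supp_2(D)}\geq d_{\alpha_1}(C_1\cap C_2)$.

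Combining with bound~(a) for the second block, which is unaffected, gives $\abs{\supp_2(D)}\geq\max\{d_{r-\alpha_2}(C_1+C_2),d_{\alpha_1}(C_1\cap C_2)\}$, i.e. the improved second summand of $B_{\alpha_1,\alpha_2}$. The remaining task is purely bookkeeping: I would check that the admissible range for $(\alpha_1,\alpha_2)$ contracts to exactly $Y_2'$. The new containment $\B''\subset C_1\cap C_2$ forces $\alpha_1\leq\dim(C_1\cap C_2)$, which tightens the constraint $\alpha_1\leq\dim C_2$ of $Y_2$; the other constraints $\alpha_2\leq\dim(C_1\cap C_2)$, the lower bounds $r-\dim(C_1+C_2)\leq\alpha_i$ (from the linear independence of the first and second blocks inside $C_1+C_2$), and $\alpha_1+\alpha_2\leq r$ (from $\dim D_3\geq 0$) are identical to before. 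Minimizing $B_{\alpha_1,\alpha_2}$ over $Y_2'$ then finishes the proof. There is no genuine obstacle here, since the argument is a routine adaptation; the only point requiring care is confirming that the index set really shrinks to $Y_2'$, and a reassuring sanity check is that, because all four entries of $A$ are now nonzero, the two block estimates have become fully symmetric in $\alpha_1$ and $\alpha_2$.
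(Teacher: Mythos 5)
Your proposal is correct and follows exactly the paper's route: the paper's proof is simply the observation that, when $a_{21}\neq 0$, the argument in step (b) of Theorem \ref{T:doscodigos} yields $v_1,v_2\in C_1\cap C_2$ for both blocks, which is precisely the single point you identify and elaborate. The accompanying bookkeeping on the index set (tightening $\alpha_1\leq \dim C_2$ to $\alpha_1\leq\dim(C_1\cap C_2)$, everything else unchanged) is also as in the paper.
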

\begin{proof}
Following the proof of Theorem \ref{T:doscodigos}, if $a_{21}\neq 0$, then in (b) we have $v_1,v_2\in C_1\cap C_2$ for both blocks $i=1,2$.
\end{proof}

In the case of $a_{21}=0$, we consider instead the set
$$
    Y_2'' := \left\{(\alpha_1,\alpha_2) \in [r]^2: \begin{array}{c}
        r-\dim(C_1) \leq  \alpha_1 \leq \dim C_2\\
        r-\dim(C_1+C_2) \leq  \alpha_2 \leq  \dim C_1\cap C_2\\
        \alpha_1+\alpha_2\leq r
        \end{array}\right\}.
$$

\begin{cor}\label{C:differentboundsh20}
With the notation as before, if $a_{21}=0$, then
$$
d_r(C)\geq \min_{(\alpha_1,\alpha_2)\in Y_2''}B_{\alpha_1,\alpha_2},
$$
where 
$$
B_{\alpha_1,\alpha_2}= \max\{d_{r-\alpha_1}(C_1),d_{\alpha_2}(C_1\cap C_2)\}+ \max\{d_{r-\alpha_2}(C_1+C_2),d_{\alpha_1}(C_2)\}. 
$$
\end{cor}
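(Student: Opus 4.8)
The plan is to mirror the proof of Theorem \ref{T:doscodigos} almost verbatim, exploiting that the only structural change is $a_{21}=0$ (while the NSC hypothesis together with our convention still guarantee $a_{11},a_{12},a_{22}\neq 0$). First I would fix a subcode $D\subset C$ with $\dim D=r$, keep the same decomposition $D=D_1\oplus D_2\oplus D_3$ with $D_1=D(e_1)$, $D_2=D(e_2)$ and $D_3=D/(D(e_1)+D(e_2))$, and set $\alpha_1=\dim D_1$, $\alpha_2=\dim D_2$, so that $\dim D_3=r-\alpha_1-\alpha_2\geq 0$. Since $\abs{\supp(D)}=\abs{\supp_1(D)}+\abs{\supp_2(D)}$, it suffices to bound each block from below and recover $B_{\alpha_1,\alpha_2}$.

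The key observation I would use is that, because $a_{21}=0$, the first block of a codeword $c=[v_1,v_2]\cdot A$ equals $a_{11}v_1+a_{21}v_2=a_{11}v_1$, which lies in $C_1$ and not merely in $C_1+C_2$. Repeating part (a) of the proof of Theorem \ref{T:doscodigos}, the set $\B'$ of first blocks of $\B_2\cup\B_3$ is then a linearly independent subset of $C_1$ of size $r-\alpha_1$, so that $\abs{\supp_1(D)}\geq d_{r-\alpha_1}(C_1)$. Part (b) carries over unchanged: for $c=(c_1,0)\in D_2$ one has $0=a_{12}v_1+a_{22}v_2$, and since $a_{12},a_{22}\neq 0$ this forces $v_1,v_2\in C_1\cap C_2$, whence $c_1\in C_1\cap C_2$ and $\abs{\supp_1(D)}\geq d_{\alpha_2}(C_1\cap C_2)$; the maximum of the two is the first summand of $B_{\alpha_1,\alpha_2}$. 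For the second block nothing changes with respect to Theorem \ref{T:doscodigos}: the second coordinate is still $a_{12}v_1+a_{22}v_2\in C_1+C_2$, and in its part (b) the relation $0=a_{11}v_1$ gives $v_1=0$ (as $a_{11}\neq 0$), so $c_2=a_{22}v_2\in C_2$, yielding $\abs{\supp_2(D)}\geq\max\{d_{r-\alpha_2}(C_1+C_2),d_{\alpha_1}(C_2)\}$, the second summand.

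Finally I would check that $(\alpha_1,\alpha_2)\in Y_2''$. The bounds $\alpha_1\leq\dim C_2$, $r-\dim(C_1+C_2)\leq\alpha_2\leq\dim(C_1\cap C_2)$ and $\alpha_1+\alpha_2\leq r$ come out exactly as in Theorem \ref{T:doscodigos}; the only new ingredient is the sharpened lower bound $\alpha_1\geq r-\dim C_1$, for which I would apply the rank--nullity theorem to the first-block projection $\pi_1\colon D\to\F_q^n$, $c\mapsto a_{11}v_1$. Its kernel is $D(e_1)=D_1$, of dimension $\alpha_1$, and by the key observation its image is contained in $C_1$, so $\dim\Ima\pi_1\leq\dim C_1$; combining with $r=\alpha_1+\dim\Ima\pi_1$ gives the claim.

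I do not expect a real obstacle, since this is a bookkeeping refinement of Theorem \ref{T:doscodigos}: the substance is entirely contained in the single observation that $a_{21}=0$ pins the first block inside $C_1$. The only point demanding care is to use this observation consistently in both places where it bites, namely in replacing $d_{r-\alpha_1}(C_1+C_2)$ by $d_{r-\alpha_1}(C_1)$ in the support estimate and in improving the lower bound on $\alpha_1$ from $r-\dim(C_1+C_2)$ to $r-\dim C_1$; the remaining steps transfer verbatim.
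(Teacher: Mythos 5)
Your proposal is correct and follows exactly the route the paper takes: the paper's proof of this corollary is a one-line remark that one repeats the proof of Theorem \ref{T:doscodigos}, observing that $a_{21}=0$ forces $c_1\in C_1$ and thereby improves both the support bound in part (a) for the first block and the corresponding lower constraint on $\alpha_1$. Your write-up merely makes explicit the details (including the rank--nullity argument for $r-\dim C_1\leq\alpha_1$, which is equivalent to the linear-independence count used in the theorem), and all of them check out.
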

\begin{proof}
We follow the proof of Theorem \ref{T:doscodigos}. If $a_{21}=0$, then for any $c\in C$, we have $c_1\in C_1$, improving the bound obtained in (a) for the first block.
\end{proof}

\begin{rem}
The ideas in this section are a generalization of the arguments from \cite{sanjoseRecursivePRM}, where the author computes a particular generator matrix for any subcode of a projective Reed-Muller code that is given by two parameters, $\alpha$ and $\gamma$. Those parameters play the role of $r-\alpha_2$ and $\alpha_1$, respectively, in this section. 
\end{rem}

Note that, if $C_2\subset C_1$, then all the bounds given in this section coincide. However, as we show in the next example, if we do not have this nested condition, then we can obtain different bounds in Corollaries \ref{C:differentboundsh2no0} and \ref{C:differentboundsh20}. Moreover, in the next example we also show that, if the codes are not nested, our bounds can refine the usual bounds for the minimum distance of the $(u,u+v)$ and $(u+v,u-v)$ constructions by considering $d_1(C_1+C_2)$ and $d_1(C_1\cap C_2)$.

\begin{ex}\label{ex:uuv}
Let $q=3$, and consider 
$$
G_1=\begin{pmatrix}
    0 & 1 & 0 & 0 & 0 & 1 & 1 & 0 \\
-1 & 1 & 0 & 1 & -1 & 1 & 0 & 1 \\
-1 & 1 & -1 & 1 & 1 & 1 & 1 & 0
\end{pmatrix}, \; G_2=\begin{pmatrix}
    -1 & 0 & 1 & 1 & -1 & 1 & -1 & 0 \\
1 & 1 & 0 & 1 & -1 & -1 & -1 & -1
\end{pmatrix}.
$$
Let $C_1$ and $C_2$ be the linear codes whose generator matrices are $G_1$ and $G_2$. Then, one can check that $C_1\cap C_2=\{0\}$, and the GHWs of $C_1$, $C_2$ and $C_1+C_2$ are given in Table \ref{table:GHWSej1}.

\begin{table}[ht]
\caption{GHWs of $C_1$, $C_2$ and $C_1+C_2$}
\centering
\label{table:GHWSej1}
\begin{tabular}{c|ccccccccccccccc}
GHWs$\backslash r$ & 1&2&3&4&5 \\
\hline
$d_r(C_1)$& 3&6&8&-&- \\
$d_r(C_2)$& 5&8&-&-&-\\
$d_r(C_1+C_2)$& 3&5&6&7&8\\
\end{tabular}
\end{table}

Now we define the matrices
$$
A_1:=\begin{pmatrix}
    1&1\\
    0&1
\end{pmatrix},\; A_2:=\begin{pmatrix}
    1&1\\
    1&-1
\end{pmatrix},
$$
which correspond to the $(u,u+v)$ and $(u+v,u-v)$ constructions, respectively. Let $D_1=[C_1,C_2]\cdot A_1$, $D_2=[C_1,C_2]\cdot A_2$. The usual bounds for the minimum distance of $D_1$ and $D_2$ would give $\min \{ 2d_1(C_1),d_1(C_2) \}=5$ (see \cite[Thm. 2.1.32 \& Prop. 2.1.39]{pellikaanlibro}). However, our bounds from Corollaries \ref{C:differentboundsh2no0} and \ref{C:differentboundsh20} give the values from Table \ref{table:u+vu-v}.

\begin{table}[ht]
\caption{Lower bounds from Corollaries \ref{C:differentboundsh2no0} and \ref{C:differentboundsh2no0}}
\centering
\label{table:u+vu-v}
\begin{tabular}{c|ccccccccccccccc}
Bound$\backslash r$ & 1&2&3&4&5 \\
\hline
Lower bound for $D_1$& 5&8&11&14&16\\
Lower bound for $D_2$& 6&10&12&14&16\\
\end{tabular}
\end{table}

Note that the bound for $d_1(D_2)$ has been improved to $6$. Also, notice that the bounds obtained from Corollaries \ref{C:differentboundsh2no0} and \ref{C:differentboundsh20} are different in this case for $A_1$ and $A_2$. This is noteworthy since, as we said before, the usual bounds for the minimum distance of the $(u,u+v)$ construction and the $(u+v,u-v)$ construction are the same (assuming $d_1(C_1)\leq d_1(C_2)$). The true values of the GHWs are given in Table \ref{table:trueu+vu-v}, showing that our bounds are sharp in this case, except in the case $r=4$ for $D_2$. 

\begin{table}[ht]
\caption{GHWs of $D_1$ and $D_2$}
\centering
\label{table:trueu+vu-v}
\begin{tabular}{c|ccccccccccccccc}
GHWs$\backslash r$ & 1&2&3&4&5 \\
\hline
$d_r(D_1)$& 5&8&11&14&16\\
$d_r(D_2)$& 6&10&12&15&16\\
\end{tabular}
\end{table}
In this case, since $C_1\cap C_2=\{0\}$, the lower bounds from Corollaries \ref{C:differentboundsh2no0} and \ref{C:differentboundsh20} are particularly easy to compute. Indeed, if $a_{21}\neq 0$ (the case of $A_2$), we have $Y_2'=\{(0,0)\}$. Thus, the bound from Corollary \ref{C:differentboundsh2no0} is just
$$
d_r(D_2)\geq B_{0,0}=2d_r(C_1+C_2). 
$$
For the case $a_{21}=0$, we obtain $Y_2''=\{(\alpha_1,0)\in [r]^2: r-3\leq \alpha_1\leq 2 \}$, and the bound from Corollary \ref{C:differentboundsh20} is
$$
d_r(D_1)\geq \min_{(\alpha_1,0)\in Y_2''} B_{\alpha_1,0}=\min_{(\alpha_1,0)\in Y_2''} \{d_{r-\alpha_1}(C_1)+\max \{ d_r(C_1+C_2),d_{\alpha_1}(C_2)\}\}. 
$$
For example, for $r=3$, we have $Y_2''=\{(0,0),(1,0),(2,0)\}$, and
$$
d_3(D_1)\geq \min \{ 8+\max\{3,0\},6+\max\{6,5 \},3+\max\{ 6,8\}\}=11.
$$
\end{ex}

\section{A bound for the GHWs of nested MPCs with NSC matrices}\label{S:generalbound}
In this section, we will show how to obtain a lower bound for the GHWs of MPCs with $s$ constituent codes. We will assume that the codes are nested, i.e., $C_s\subset  \cdots \subset C_1 \subset \F_q^n$. We consider $A$ an $s\times h$ NSC matrix over $\F_q$ with $s\leq h$. By \cite[Prop. 3.3]{blackmoreMPC}, this implies that $h\leq q$. Let $C=[C_1,\dots,C_s]\cdot A$. From \cite[Lem. 6]{LuoSymbolpairMPC} we have the following result.

\begin{lem}\label{l:mpcsceros}
Let $C_s\subset \cdots \subset C_1 \subset \F_q^n$ and $A$ an $s\times h$ NSC matrix over $\F_q$. Let $C=[C_1,\dots,C_s]\cdot A$ and $c\in C$. We consider the $h$ blocks of length $n$ of $c$, that is, $c=(c_1,\dots,c_h)$. Let $0\leq \ell \leq s-1$. If there are exactly $\ell$ zero vectors among the blocks $c_1,\dots,c_h$, then $c_j\in C_{\ell +1}$, for every $1\leq j\leq h$. If the number of zero vectors among $c_1,\dots,c_h$ is greater than $s-1$, then $c=0$. 
\end{lem}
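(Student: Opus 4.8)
The plan is to work directly with a representation $c=[v_1,\dots,v_s]\cdot A$ with $v_\ell\in C_\ell$, so that the $j$-th block is $c_j=\sum_{k=1}^s a_{kj}v_k$. The nestedness $C_s\subset\cdots\subset C_1$ is the key structural input: for every $k\geq \ell+1$ we automatically have $v_k\in C_k\subseteq C_{\ell+1}$, so any contribution to $c_j$ lying outside $C_{\ell+1}$ can only come from $v_1,\dots,v_\ell$. Hence the first assertion reduces to showing that $v_1,\dots,v_\ell\in C_{\ell+1}$, after which $c_j=\sum_k a_{kj}v_k$ is a sum of elements of $C_{\ell+1}$ and the conclusion follows.

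To establish that, I would collect the indices $J=\{j_1<\cdots<j_\ell\}$ of the $\ell$ vanishing blocks and write out the $\ell$ conditions $\sum_{k=1}^s a_{kj_m}v_k=0$, for $m=1,\dots,\ell$. Separating the ``head'' $k\leq\ell$ from the ``tail'' $k\geq\ell+1$, these become $\sum_{k=1}^\ell a_{kj_m}v_k=w_m$, where $w_m:=-\sum_{k=\ell+1}^s a_{kj_m}v_k\in C_{\ell+1}$ by nestedness. The coefficient matrix of this system in the unknowns $v_1,\dots,v_\ell$ is exactly $A(j_1,\dots,j_\ell)$ up to transposition, which is nonsingular over $\F_q$ by the NSC hypothesis with $t=\ell$.

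Inverting this scalar matrix then expresses each $v_k$ (for $1\leq k\leq\ell$) as an $\F_q$-linear combination of $w_1,\dots,w_\ell\in C_{\ell+1}$; since $C_{\ell+1}$ is a linear code it is closed under such combinations, giving $v_k\in C_{\ell+1}$. For the final claim, if at least $s$ blocks vanish I would select $s$ of their indices $j_1,\dots,j_s$ and obtain the homogeneous system with coefficient matrix $A(j_1,\dots,j_s)$, nonsingular by NSC with $t=s$; this forces $v_1=\cdots=v_s=0$, i.e.\ $c=0$.

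The main obstacle I anticipate is the bookkeeping in the linear algebra: the unknowns $v_k$ are themselves vectors in $\F_q^n$, so I must phrase the system so that the scalar matrix $A(j_1,\dots,j_\ell)$ acts coordinatewise, and I must align the indexing of equations and columns correctly to recognize that the relevant $\ell\times\ell$ submatrix is precisely the one whose invertibility NSC guarantees (keeping track of the transpose). Once that identification is pinned down, the inversion over $\F_q$ together with the closure of $C_{\ell+1}$ under $\F_q$-linear combinations finishes the argument routinely.
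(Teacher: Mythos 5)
Your argument is correct. Note that the paper does not actually prove this lemma: it is quoted from \cite[Lem. 6]{LuoSymbolpairMPC} without proof. Your proof is the standard one for such statements about NSC matrices: restrict to the $\ell$ (resp.\ $s$) vanishing blocks, observe that the resulting linear system in $v_1,\dots,v_\ell$ has coefficient matrix $A(j_1,\dots,j_\ell)^{T}$, which is invertible by the NSC hypothesis, and use nestedness to place the right-hand sides $w_m$ in $C_{\ell+1}$; the coordinatewise action of the inverse scalar matrix and the linearity of $C_{\ell+1}$ then give $v_1,\dots,v_\ell\in C_{\ell+1}$, hence $c_j\in C_{\ell+1}$ for all $j$ (and $c=0$ in the homogeneous case $\ell=s$). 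The transposition and the degenerate case $\ell=0$ are handled correctly, so there is no gap.
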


Let $1\leq i \leq h$, $1\leq r \leq \dim C=\sum_{\ell=1}^s \dim (C_\ell)$, and $D\subset C$ a subcode of $C$ with $\dim D=r$. For $0\leq j \leq s-1$, we define the vector space
\begin{equation}\label{eq:quotientspace}
D^i_j:=\left. \left(D(e_i) +\sum_{y\in \mathbb{Z}_2^h,\; \wt(y)=j}D(y)\right)\middle/  D(e_i).\right. 
\end{equation}
One way to think about $D_j^i$ is that these are the codewords $c\in D$ with $c_i\neq 0$ (because we take the quotient by $D(e_i)$), and which can be generated by codewords of $D$ with at least $j$ zero blocks. Note that, since $D(y)\subset D(e_i)$ if $y_i=1$, we have 
$$
D(e_i) +\sum_{y\in \mathbb{Z}_2^h,\; \wt(y)=j}D(y)=D(e_i)+\sum_{y\in \mathbb{Z}_2^h,\; \wt(y)=j,\; y_i=0}D(y).
$$

\begin{ex}\label{ex:Dquotient}
For $s=h=2$ and $1\leq i \leq 2$ we have
$$
D_0^i=(D(e_i)+D)/D(e_i)= D/D(e_i),\; D_1^i=(D(e_1)+D(e_2))/D(e_i)\cong D(e_{i+1}),
$$
where we understand the subindex $i+1$ cyclically mod 2, i.e., $2+1\equiv 1$. Note that the vector spaces $D_1^i$ already appeared in the proof of Theorem \ref{T:doscodigos}.
\end{ex}

We can consider a basis for this last vector space where every vector is either in some $D(y)$, with $\wt(y)=j$, $y_i=0$, or in $D(e_i)$. The classes of these vectors in (\ref{eq:quotientspace}) form a generating set, from which we can extract a basis $\B^i_j$ (regarded in $\F_q^{hn}$ by fixing some representatives) where every vector is in some $D(y)$, with $\wt(y)=j$, and is not contained in $D(e_i)$. That is, each vector of $\B^i_j$ has at least $j$ zero blocks, and its $i$-th block is nonzero. We now define
$$
\B_{j,i}^i:=\{c_i: c\in \B^i_j\},
$$
which is the set given by the $i$-th blocks of the vectors in $\B^i_j$. 

\begin{ex}
Following the setting of Example \ref{ex:Dquotient}, we have
$$
\abs{\B_0^i}=\dim D_0^i=r-\dim D(e_i),\; \abs{\B_1^i}=\dim D_1^i=\dim D(e_{i+1}).
$$
\end{ex}

Now we can use the sets we have just defined to obtain a bound for $\abs{\supp_i(D)}$.

\begin{lem}\label{l:suppiD}
We have that 
$$
\abs{\supp_i(D)} \geq d_{\abs{\B_j^i}}(C_{j+1}). 
$$
\end{lem}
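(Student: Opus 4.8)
The plan is to exhibit a concrete subcode of $C_{j+1}$ of dimension exactly $\abs{\B_j^i}$ whose support is contained in $\supp_i(D)$, and then to read off the bound directly from the definition of the GHW. First I would record the elementary observation that $\supp_i(D)$ depends only on the $i$-th blocks of the codewords of $D$: writing $c_i$ for the $i$-th block of $c$, one has $\supp_i(D)=\bigcup_{c\in D}\supp(c_i)$ under the identification of the $i$-th block of coordinates with $\{1,\dots,n\}$, which is just Remark \ref{r:baseGHW} applied block-wise. Consequently, if $E\subseteq \F_q^n$ denotes the code generated by $\B_{j,i}^i=\{c_i : c\in \B^i_j\}$, then every generator of $E$ is the $i$-th block of a codeword of $D$, so $\supp(E)\subseteq \supp_i(D)$ and it suffices to bound $\abs{\supp(E)}$ from below.

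The two facts to verify are that $E\subseteq C_{j+1}$ and that $\dim E=\abs{\B_j^i}$. For the containment, recall that each $c\in \B^i_j$ lies in some $D(y)$ with $\wt(y)=j$, hence has at least $j$ zero blocks, while its $i$-th block is nonzero so $c\neq 0$. If $c$ has exactly $\ell$ zero blocks, then $j\leq \ell$, and since $c\neq 0$ Lemma \ref{l:mpcsceros} forces $\ell\leq s-1$ and gives $c_k\in C_{\ell+1}$ for every $k$; in particular $c_i\in C_{\ell+1}\subseteq C_{j+1}$ by the nestedness $C_s\subset\cdots\subset C_1$ (as $\ell\geq j$). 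Thus every generator of $E$ lies in $C_{j+1}$, so $E\subseteq C_{j+1}$.

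The main obstacle is the dimension count, i.e. showing that the map $c\mapsto c_i$ is injective on $\B^i_j$ and sends it to a linearly independent family. For this I would exploit that $\B^i_j$ is a set of representatives of a basis of the quotient $D^i_j$ from (\ref{eq:quotientspace}), so its elements are linearly independent modulo $D(e_i)$. Indeed, if $\sum_{c\in\B^i_j}\lambda_c\, c_i=0$, then $\sum_{c\in\B^i_j}\lambda_c\, c$ has vanishing $i$-th block, hence lies in $D(e_i)$, which forces all $\lambda_c=0$ by independence in $D^i_j$; this simultaneously gives injectivity and linear independence, so $\dim E=\abs{\B_{j,i}^i}=\abs{\B_j^i}$. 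Combining the two points, $E$ is a subcode of $C_{j+1}$ of dimension $\abs{\B_j^i}$, whence $\abs{\supp(E)}\geq d_{\abs{\B_j^i}}(C_{j+1})$ by definition of the GHW, and therefore $\abs{\supp_i(D)}\geq \abs{\supp(E)}\geq d_{\abs{\B_j^i}}(C_{j+1})$, as claimed.
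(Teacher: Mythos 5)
Your proposal is correct and follows essentially the same route as the paper: linear independence of $\B_{j,i}^i$ via independence of the classes in the quotient $D^i_j$, containment in $C_{j+1}$ via Lemma \ref{l:mpcsceros} together with the nesting of the constituent codes, and then the bound from the definition of the GHWs. You merely spell out in more detail the step (left implicit in the paper) that a vector with exactly $\ell\geq j$ zero blocks has its blocks in $C_{\ell+1}\subseteq C_{j+1}$.
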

\begin{proof}
We claim that $\B_{j,i}^i$ is a linearly independent set. Indeed, if we assume it is linearly dependent, this would give a linear combination of vectors of $\B_j^i$ in $D(e_i)$, a contradiction, since the classes of the vectors of $\B_j^i$ are linearly independent in $D_j^i$ (see (\ref{eq:quotientspace})). By Lemma \ref{l:mpcsceros}, we have $\B_{j,i}^i\subset C_{j+1}$, and
$$
\abs{\supp_i(D)}\geq \abs{\bigcup_{b\in \B_{j,i}^i}\supp_i(b)} \geq d_{\abs{\B_{j,i}^i}}(C_{j+1})=d_{\abs{\B_j^i}}(C_{j+1}). 
$$

\end{proof}

From this lemma we can obtain a general result bounding the GHWs of an MPC. Note that $\B_j^i$ depends on the subcode $D$, and we could write $\B_j^i(D)$ to make this explicit, but we avoid doing this for ease of notation.

\begin{prop}\label{p:generallowerbound}
Let $C_s\subset  \cdots \subset C_1 \subset \F_q^n$ be linear codes, $A$ an $s\times h$ NSC matrix over $\F_q$ with $s\leq h$, and $C=[C_1,\dots,C_s]\cdot A$. For $1\leq r \leq \dim C=\sum_{\ell=1}^s\dim C_\ell$, we have
$$
d_r(C)\geq \min_{D\subset C, \; \dim D=r}\left(\sum_{i=1}^h \max\{ d_{\abs{\B_j^i}}(C_{j+1}),\; 0\leq j \leq s-1 \}\right).
$$
\end{prop}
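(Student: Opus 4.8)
The plan is to reduce everything to the block-by-block estimate already established in Lemma \ref{l:suppiD}, so that the proposition becomes a bookkeeping assembly of those per-block bounds. First I would fix an arbitrary subcode $D\subset C$ with $\dim D=r$ and keep it fixed throughout. For this $D$, the sets $\B_j^i=\B_j^i(D)$ are well defined for every $1\leq i\leq h$ and every $0\leq j\leq s-1$, so Lemma \ref{l:suppiD} applies and yields
$$
\abs{\supp_i(D)}\geq d_{\abs{\B_j^i}}(C_{j+1})
$$
for each such $j$. Since a lower bound valid for every $j$ is in particular a lower bound by the largest among them, I would immediately upgrade this to
$$
\abs{\supp_i(D)}\geq \max\{d_{\abs{\B_j^i}}(C_{j+1}) : 0\leq j\leq s-1\}.
$$

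Next I would assemble the $h$ blocks. Using the disjoint decomposition of the support recorded in (\ref{eq:sumasupport}), namely $\abs{\supp(D)}=\sum_{i=1}^h\abs{\supp_i(D)}$, and summing the previous inequality over $i=1,\dots,h$, I obtain
$$
\abs{\supp(D)}\geq \sum_{i=1}^h\max\{d_{\abs{\B_j^i}}(C_{j+1}) : 0\leq j\leq s-1\}.
$$
Denote the right-hand side by $F(D)$; it depends on $D$ only through the integers $\abs{\B_j^i(D)}$. Finally I would pass to the minimum over subcodes. By definition $d_r(C)=\min_{\dim D=r}\abs{\supp(D)}$, and the pointwise inequality $\abs{\supp(D)}\geq F(D)$ holds for every admissible $D$. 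Evaluating at a subcode $D^*$ realizing the minimum of the left-hand side gives $d_r(C)=\abs{\supp(D^*)}\geq F(D^*)\geq \min_{\dim D=r}F(D)$, which is exactly the claimed bound.

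I do not expect a genuine obstacle here: the substantive work is carried by Lemma \ref{l:suppiD} (and, beneath it, Lemma \ref{l:mpcsceros} together with the linear independence of $\B_{j,i}^i$), so what remains is purely combinatorial assembly — taking the maximum over $j$, summing over $i$ via (\ref{eq:sumasupport}), and minimizing over $D$. The one point that deserves care is the final minimization: since $F(D)$ is itself $D$-dependent through the $\B_j^i$, one cannot pull a fixed quantity out of the minimum, and must instead argue, as above, that a pointwise inequality between two functions of $D$ descends to their minima. I would also emphasize that the resulting estimate is not yet explicit, as the right-hand side still ranges over all subcodes of dimension $r$; turning this into a usable closed form is precisely the task of the specializations in Subsections \ref{ss:h2} and \ref{ss:h3}.
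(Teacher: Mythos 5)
Your proposal is correct and follows exactly the paper's own argument: apply Lemma \ref{l:suppiD} for each $j$ to get the per-block maximum, sum over the $h$ blocks via (\ref{eq:sumasupport}), and then observe that the pointwise inequality $\abs{\supp(D)}\geq F(D)$ descends to the minimum over all $r$-dimensional subcodes. Your extra remark on why the final minimization is legitimate is a careful elaboration of what the paper compresses into ``the result follows from the definition of GHWs,'' not a different route.
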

\begin{proof}
Let $D\subset C$ be a subcode with $\dim D=r$. Using Lemma \ref{l:suppiD} for every block $i$, $1\leq i \leq h$, and taking into account (\ref{eq:sumasupport}), we obtain the bound 
$$
\abs{\supp(D)}\geq \sum_{i=1}^h \max\{ d_{\abs{\B_j^i}}(C_{j+1}),\; 0\leq j \leq s-1 \}.
$$
The result follows from the definition of GHWs. 
\end{proof}

\begin{rem}\label{r:generalizadmin}
For the case $r=1$, this bound generalizes the bound from (\ref{eq:cotadminNSC}). Indeed, let $D\subset C$ with $\dim D=1$, and consider $i,j$ such that $\abs{\B_{j}^i}=1$ (since $r=1$, $\abs{\B_{j}^i}$ is either 0 or 1, and if all of them are 0, this would correspond to the subcode $D=\{0\}$). This means that $D$ is generated by a vector $c$ with at least $j$ zero blocks, and with a nonzero $i$-th block. Let 
$$
j':=\abs{\{k: c_k= 0\}},
$$
that is, the number of zero blocks of $c$. Then $\abs{\B_{j'}^i}=1$ since we can assume $\B_{j'}^i=\{c\}$. It follows from the definitions that, in this case, we have
$$
\abs{\B_{k}^i}=1 \iff k\leq j',\; c_i\neq 0,
$$
and, thus, $\abs{\B_{k}^i}=0$ otherwise. Then, for any $i$ such that $c_i\neq 0$, we have
$$
\max\{ d_{\abs{\B_j^i}}(C_{j+1}),\; 0\leq j \leq s-1 \}=\max\{d_1(C_1),\dots,d_{1}(C_{j'+1})\}=d_{1}(C_{j'+1}).
$$
Since $c$ has exactly $h-j'$ nonzero blocks, we obtain
$$
\sum_{i=1}^h \max\{ d_{\abs{\B_j^i}}(C_{j+1}),\; 0\leq j \leq s-1 \}=(h-j')d_1(C_{j'+1}),
$$
which shows that the bound from Proposition \ref{p:generallowerbound} simplifies to (\ref{eq:cotadminNSC}) in this case. 
\end{rem}

The advantage of using Proposition \ref{p:generallowerbound} to compute the GHWs of $C$ instead of the definition is that, even though the bound from Proposition \ref{p:generallowerbound} requires to compute a minimum over all the subcodes $D\subset C$ with $\dim D=r$, the values we are minimizing only depend on $\abs{\B_j^i}$, e.g., see Remark \ref{r:generalizadmin}. Now assume we have a set $Y_s$ and a family of bounds $\{B_v\}_{v\in Y_s}$ such that for any subcode $D\subset C$ with $\dim D=r$, we have
$$
\sum_{i=1}^h \max\{ d_{\abs{\B_j^i}}(C_{j+1}),\; 0\leq j \leq s-1 \} =B_v,
$$
for some $v\in Y_s$. From Proposition \ref{p:generallowerbound} we obtain
\begin{equation}\label{eq:cotageneralY}
d_r(C)\geq \min_{D\subset C, \; \dim D=r}\left(\sum_{i=1}^h \max\{ d_{\abs{\B_j^i}}(C_{j+1}),\; 0\leq j \leq s-1 \}\right) \geq \min_{v\in Y_s}B_v.
\end{equation}

In the next subsections we show how to obtain a set $Y_s$ and a family of bounds $\{B_v\}_{v\in Y_s}$ for the cases of $s=2$ and $s=3$, which are the most used cases for applications.

\subsection{The case $h=2$}\label{ss:h2}
For the case $s=h=2$ we can recover what we obtained in Section \ref{S:2codigos} for the nested case. We recall that if $C_2\subset C_1$, we have $Y_2=Y'_2=Y''_2$ (using the notation from Section \ref{S:2codigos}), and
$$
Y_2 = \left\{(\alpha_1,\alpha_2)\in [r]^2: \begin{array}{c}
        r-\dim C_1 \leq  \alpha_i \leq\dim C_2, \; 1\leq i\leq 2 \\
        \alpha_1+\alpha_2\leq r
        \end{array}
        \right\}.
$$

\begin{cor}\label{C:h2}
Let $C_2\subset C_1\subset \fq^n$, $C=[C_1,C_2]\cdot A$, for some $2\times 2$ NSC matrix $A$. Consider $1\leq r \leq \dim C_1+\dim C_2$, and
$$
B_{\alpha_1,\alpha_2}= \max\{d_{r-\alpha_1}(C_1),d_{\alpha_2}(C_2)\}+ \max\{d_{r-\alpha_2}(C_1),d_{\alpha_1}(C_2)\},
$$
for $(\alpha_1,\alpha_2)\in Y_2$. Then 
$$
d_r(C)\geq \min_{(\alpha_1,\alpha_2)\in Y_2}B_{\alpha_1,\alpha_2}.
$$
\end{cor}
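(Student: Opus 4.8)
The plan is to specialize the general lower bound of Proposition \ref{p:generallowerbound} to the case $s=h=2$, so that the corollary becomes a direct instance of the scheme encoded in (\ref{eq:cotageneralY}). Concretely, I would fix a subcode $D\subset C$ with $\dim D=r$ and set $\alpha_1=\dim D(e_1)$ and $\alpha_2=\dim D(e_2)$, and then establish two things: first, that the summand attached to $D$ in Proposition \ref{p:generallowerbound} equals $B_{\alpha_1,\alpha_2}$; and second, that the resulting pair $(\alpha_1,\alpha_2)$ always lies in $Y_2$. Once both are checked, (\ref{eq:cotageneralY}) yields $d_r(C)\geq \min_{(\alpha_1,\alpha_2)\in Y_2}B_{\alpha_1,\alpha_2}$ with no further work.

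For the first point, I would simply read off the cardinalities $\abs{\B_j^i}$ from Example \ref{ex:Dquotient} and the example following it, which give, for $s=h=2$,
$$
\abs{\B_0^i}=\dim D_0^i=r-\dim D(e_i),\qquad \abs{\B_1^i}=\dim D_1^i=\dim D(e_{i+1}),
$$
with the subindex $i+1$ read cyclically mod $2$. Hence the block-$i$ term in Proposition \ref{p:generallowerbound} is $\max\{d_{r-\dim D(e_i)}(C_1),\,d_{\dim D(e_{i+1})}(C_2)\}$, and summing over $i=1,2$ while substituting $\alpha_1=\dim D(e_1)$ and $\alpha_2=\dim D(e_2)$ reproduces exactly
$$
\max\{d_{r-\alpha_1}(C_1),d_{\alpha_2}(C_2)\}+\max\{d_{r-\alpha_2}(C_1),d_{\alpha_1}(C_2)\}=B_{\alpha_1,\alpha_2}.
$$

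The remaining step, which needs a little care, is to verify $(\alpha_1,\alpha_2)\in Y_2$; this is where Lemma \ref{l:mpcsceros} enters. For the upper bound $\alpha_i\leq\dim C_2$, I would note that a codeword $c\in D(e_i)$ has its $i$-th block zero, so by Lemma \ref{l:mpcsceros} all its blocks lie in $C_2$, and the map $c\mapsto c_{i+1}$ is injective on $D(e_i)$ (a codeword with both blocks zero is $0$ by the same lemma); thus $D(e_i)$ embeds into $C_2$. For the lower bound $\alpha_i\geq r-\dim C_1$, I would use that the $i$-th blocks of a basis of $D/D(e_i)$ are linearly independent elements of $C_1$ (every block of a codeword lies in $C_1\supset C_2$), whence $r-\alpha_i\leq\dim C_1$. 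Finally $\alpha_1+\alpha_2\leq r$ follows since $D(e_1)\cap D(e_2)=D(e_1+e_2)=\{0\}$, so $D(e_1)\oplus D(e_2)\subset D$ has dimension $\alpha_1+\alpha_2\leq r$. The main (and mild) obstacle is purely bookkeeping: confirming that these three inequalities, together with the trivial $0\leq\alpha_i\leq r$, reproduce precisely the defining constraints of $Y_2$, and that the three bounds guarantee all GHW indices occurring in $B_{\alpha_1,\alpha_2}$ are legitimate (which is exactly ensured by $r-\dim C_1\leq\alpha_i\leq\dim C_2$).
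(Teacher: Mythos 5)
Your proposal is correct and follows essentially the same route as the paper: it specializes Proposition \ref{p:generallowerbound} via (\ref{eq:cotageneralY}) by identifying $\abs{\B_0^i}=r-\alpha_i$ and $\abs{\B_1^i}=\alpha_{i+1}$, and then checks $(\alpha_1,\alpha_2)\in Y_2$ using $\B_{j,i}^i\subset C_{j+1}$ (your injection $D(e_i)\hookrightarrow C_2$ via the $(i+1)$-th block is just the explicit form of this) together with $D(e_1)\cap D(e_2)=\{0\}$. The extra detail you supply is consistent with the paper's terser argument, and the symmetry of the sum over $i=1,2$ makes your indexing convention agree with the stated $B_{\alpha_1,\alpha_2}$.
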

\begin{proof}
Let $D\subset C$ with $\dim D=r$. Let $\alpha_i=\dim D(e_i)$, $1\leq i \leq 2$, and note that $\abs{B_0^i}=r-\alpha_{i+1}$ (we consider $i+1\bmod 2$ for the subindex, with representatives $\{1,2\}$), and $\abs{B_1^i}=\alpha_i$. The first set of conditions about $\alpha_i$, $1\leq i \leq 2$, follow from the fact that $\B_{j,i}^i\subset C_{j+1}$ and $\abs{\B_j^i}=\abs{\B_{j,i}^i}$, for $j=0,1$. The condition $\alpha_1+\alpha_2\leq r$ arises from the fact that $D(e_1)+D(e_2)\subset D$, and $D(e_1)\cap D(e_2)=\{0\}$. Therefore, by Proposition \ref{p:generallowerbound} and (\ref{eq:cotageneralY}), we obtain the result. 
\end{proof}

\subsection{The case $h=3$}\label{ss:h3}
We now apply our techniques to the case $s=h=3$. 
Throughout this section, when a subindex is greater than $3$, we consider its reduction modulo $3$, with representatives $\{1,2,3\}$. For instance, for $i=2$, we have $e_{i+1}+e_{i+2}=e_3+e_1$. We denote $[r]^{3,3,1}:=[r]^{3}\times[r]^{3}\times [r] $, and let
$$
{
    Y_3 := \left\{(\alpha,\gamma,\beta)\in [r]^{3,3,1} : \begin{array}{c}
        \gamma_i\leq \dim C_3,\;1\leq i\leq 3\\
        \max\{r-\dim C_1,\gamma_{i+1}+\gamma_{i+2}\} \leq  \alpha_i ,\; 1\leq i\leq 3\\
        \alpha_{i+1}+\alpha_{i+2}-\gamma_i\leq \beta,\; 1\leq i \leq 3 \\
        \beta \leq \min \left\{ \displaystyle \sum_{i=1}^3(\alpha_i-\gamma_i),\dim C_2+\min \{\alpha_i,1\leq i\leq 3\}\right\}\\
        \end{array} \hspace{-0.1cm}\right\}.
}
$$

\begin{thm}\label{T:h3}
Let $C_3\subset C_2\subset C_1\subset \fq^n$ and $C=[C_1,C_2,C_3]\cdot A$, for some $3\times 3$ NSC matrix $A$. Consider $1\leq r \leq \sum_{\ell=1}^3 \dim C_\ell$. For $(\alpha,\gamma,\beta)\in Y_3$, let
$$
B_{\alpha,\gamma,\beta}=\sum_{i=1}^3 \max \{ d_{r-\alpha_i}(C_1),d_{\beta-\alpha_i}(C_2),d_{\gamma_i}(C_3)\}.
$$
Then we have
$$
d_r(C)\geq \min_{(\alpha,\gamma,\beta)\in Y_3} B_{\alpha,\gamma,\beta}.
$$
\end{thm}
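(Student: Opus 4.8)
The plan is to instantiate the general scheme of Proposition \ref{p:generallowerbound} together with (\ref{eq:cotageneralY}): for an arbitrary subcode $D\subset C$ with $\dim D=r$, I will attach a triple $(\alpha,\gamma,\beta)$, show that the inner sum $\sum_{i=1}^3\max\{d_{\abs{\B_j^i}}(C_{j+1}):0\le j\le 2\}$ equals exactly $B_{\alpha,\gamma,\beta}$, and then check that the triple always lies in $Y_3$. Once both of these are in place, (\ref{eq:cotageneralY}) gives $d_r(C)\ge \min_{(\alpha,\gamma,\beta)\in Y_3}B_{\alpha,\gamma,\beta}$ immediately.

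First I would set $\alpha_i:=\dim D(e_i)$, $\gamma_i:=\dim D_2^i$, and $\beta:=\dim\bigl(D(e_1)+D(e_2)+D(e_3)\bigr)$, reading indices modulo $3$ in $\{1,2,3\}$. The point is to identify the three cardinalities entering the bound. Since $D_0^i=D/D(e_i)$, one has $\abs{\B_0^i}=r-\alpha_i$; since for $h=3$ the only $y$ with $\wt(y)=2$ and $y_i=0$ is $y=e_{i+1}+e_{i+2}$, the space $D_2^i$ is isomorphic to the codewords of $D$ supported on block $i$ alone, so $\abs{\B_2^i}=\gamma_i$. The crucial observation is that $D_1^i=\bigl(D(e_1)+D(e_2)+D(e_3)\bigr)/D(e_i)$, whence $\abs{\B_1^i}=\beta-\alpha_i$; here the numerator is symmetric in $i$, which is exactly what allows $\beta$ to be a single scalar rather than one quantity per block. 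Matching $j=0,1,2$ with $C_1,C_2,C_3$ then turns the inner sum of Proposition \ref{p:generallowerbound} into $\sum_{i=1}^3\max\{d_{r-\alpha_i}(C_1),d_{\beta-\alpha_i}(C_2),d_{\gamma_i}(C_3)\}=B_{\alpha,\gamma,\beta}$.

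It then remains to verify the four families of constraints defining $Y_3$, for which I would exploit that intersections of the $D(e_i)$ are again ``single-block'' spaces: $D(e_i)\cap D(e_j)$ is the set of codewords supported on the remaining block $k$, of dimension $\gamma_k$, while $D(e_1)\cap D(e_2)\cap D(e_3)=\{0\}$ by Lemma \ref{l:mpcsceros}. The bounds $\gamma_i\le\dim C_3$, $r-\dim C_1\le\alpha_i$, and $\beta-\alpha_i\le\dim C_2$ (i.e.\ $\beta\le\dim C_2+\min_i\alpha_i$) follow from the linear independence of the block sets $\B_{2,i}^i\subset C_3$, $\B_{0,i}^i\subset C_1$, $\B_{1,i}^i\subset C_2$ provided by Lemma \ref{l:suppiD} and Lemma \ref{l:mpcsceros}. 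The inequality $\gamma_{i+1}+\gamma_{i+2}\le\alpha_i$ holds because the codewords supported on block $i+1$ and those supported on block $i+2$ are independent subspaces of $D(e_i)$, and $\alpha_{i+1}+\alpha_{i+2}-\gamma_i\le\beta$ follows from $\dim\bigl(D(e_{i+1})+D(e_{i+2})\bigr)=\alpha_{i+1}+\alpha_{i+2}-\gamma_i$.

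I expect the main obstacle to be the remaining upper bound $\beta\le\sum_{i=1}^3(\alpha_i-\gamma_i)$, since there is no exact inclusion--exclusion for three subspaces. I would obtain it from the submodular estimate
$$
\dim(V_1+V_2+V_3)\le \sum_{i}\dim V_i-\sum_{i<j}\dim(V_i\cap V_j)+\dim(V_1\cap V_2\cap V_3),
$$
applied to $V_i=D(e_i)$ and proved by peeling off one subspace at a time using $(V_1+V_2)\cap V_3\supseteq (V_1\cap V_3)+(V_2\cap V_3)$. Substituting $\dim(V_i\cap V_j)=\gamma_k$ and $\dim(V_1\cap V_2\cap V_3)=0$ yields $\beta\le\sum_i\alpha_i-\sum_i\gamma_i$, which completes the verification that $(\alpha,\gamma,\beta)\in Y_3$ and hence, via (\ref{eq:cotageneralY}), the proof.
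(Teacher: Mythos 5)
Your proposal is correct and follows essentially the same route as the paper: the same identifications $\alpha_i=\dim D(e_i)$, $\gamma_i=\dim D(e_{i+1}+e_{i+2})\cong\dim D_2^i$, $\beta=\dim\bigl(\sum_{k}D(e_k)\bigr)$, the same computation of the cardinalities $\abs{\B_j^i}$, and the same verification of the constraints defining $Y_3$. Even your last step, packaged as a submodular inequality for three subspaces, is exactly the paper's argument of applying the dimension formula twice and using $D(e_k+e_{k+1})+D(e_k+e_{k+2})\subset D(e_k)\cap\bigl(D(e_{k+1})+D(e_{k+2})\bigr)$.
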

\begin{proof}
Let $D\subset C$ with $\dim D=r$. We consider $\alpha_i=\dim D(e_i)$, $\gamma_i=\dim D(e_{i+1}+e_{i+2})$ and $\beta=\dim(\sum_{j=1}^3D(e_j))$, for $1\leq i\leq 3$. We claim
\begin{equation}\label{eq:bijh3}
\abs{\B^i_j}=\begin{cases}
    \dim D/D(e_i) = r-\alpha_i &\text{ if $j=0$,} \\
    \dim (\sum_{k=1}^3 D(e_k))/D(e_i)=\beta-\alpha_i &\text{ if $j=1$,}\\
    \dim (D(e_i)+\sum_{k<\ell}D(e_k+e_{\ell}))/D(e_i)=\gamma_i &\text{ if $j=2$.}
    \end{cases}
\end{equation}
The cases $j=0$ and $j=1$ are straightforward. For $j=2$, we have
$$
D(e_i)+\sum_{k<\ell}D(e_k+e_{\ell})=D(e_i)+D(e_{i+1}+e_{i+2})
$$
since $D(e_i+e_j)\subset D(e_i)$, for any $j\neq i$. Taking into account that $ D(e_i)\cap D(e_{i+1}+e_{i+2})=D((1,1,1))=\{0\}$, we have
$$
\dim (D(e_i)+\sum_{k<\ell}D(e_k+e_{\ell}))/D(e_i)=\dim(D(e_i)+D(e_{i+1}+e_{i+2}))-\dim D(e_i)=\gamma_i. 
$$
Let $\alpha=(\alpha_1,\alpha_2,\alpha_3)$, and $\gamma=(\gamma_1,\gamma_2,\gamma_3)$. Now we check that $(\alpha,\gamma,\beta)\in Y_3$ (we want to use (\ref{eq:cotageneralY})). It is clear that $0\leq \gamma_i$, and, since $\gamma_i=\abs{\B_2^i}=\abs{\B_{2,i}^i}$ and $\B_{2,i}^i\subset C_3$, we have $\gamma_i\leq \dim C_3$, for $1\leq i \leq 3$. Similarly, we have $r-\alpha_i=\abs{B_0^i}$, which implies $r-\alpha_i\leq \dim C_1$, i.e., $r-\dim C_1\leq \alpha_i$, for $1\leq i\leq 3$. Now we note that
$$
D(e_i+e_{i+2})+D(e_i+e_{i+1})\subset D(e_i).
$$
Taking into account that $D(e_i+e_{i+2})\cap D(e_i+e_{i+1})=D((1,1,1))=\{0\}$, we deduce that $\gamma_{i+1}+\gamma_{i+2}\leq \alpha_i$, $1\leq i\leq 3$. Regarding the first condition for $\beta$ in $Y$, we note that
$$
\beta=\dim\left(\sum_{i=1}^3D(e_i)\right)\geq \dim (D(e_{k+1})+D(e_{k+2}))=\alpha_{k+1}+\alpha_{k+2}-\gamma_k,
$$
for $1\leq k\leq 3$. It is clear that $\beta \leq r$, and, since $\beta-\alpha_i=\abs{\B_1^i}=\abs{\B_{1,i}^i}$ and $\B_{1,i}^i\subset C_2$, we have $\beta-\alpha_i\leq \dim C_2$, $1\leq i\leq 3$. The last condition we need to prove is $\beta\leq \sum_{i=1}^3(\alpha_i-\gamma_i)$. Note that, using the formula for the dimension of the sum of vector spaces twice, we have
$$
\dim\left(\sum_{i=1}^3D(e_i)\right)=\sum_{i=1}^3\alpha_i-\gamma_k-\dim(D(e_k)\cap(D(e_{k+1})+D(e_{k+2}))),
$$
for any $1\leq k \leq 3$. Since $D(e_k+e_{k+1})+D(e_k+e_{k+2}) \subset D(e_k)\cap(D(e_{k+1})+D(e_{k+2}))$, we conclude
$$
\beta=\dim\left(\sum_{i=1}^3D(e_i)\right) \leq \sum_{i=1}^3\alpha_i- \gamma_k-(\gamma_{k+2}+\gamma_{k+1})=\sum_{i=1}^3(\alpha_i-\gamma_i).
$$
Thus, we have proved that $(\alpha,\gamma,\beta)\in Y_3$ and, if we note the expressions in (\ref{eq:bijh3}), we have also proved that
$$
\sum_{i=1}^3 \max\{ d_{\abs{\B_j^i}}(C_{j+1}),\; 0\leq j \leq 3-1 \}=B_{\alpha,\gamma,\beta},
$$
for some $(\alpha,\gamma,\beta)\in Y_3$. We obtain the result by (\ref{eq:cotageneralY}). 
\end{proof}

\begin{rem}
As we have seen in the proof of the previous result, we have incorporated some of the relations between the dimensions of $D(e_i)$, $D(e_{i+1}+e_{i+2})$ and $\sum_{k=1}^3D(e_k)$, for $1\leq i \leq 3$, using $\alpha_i$, $\gamma_i$ and $\beta$, respectively. In fact, many of the relations between these dimensions that one could expect can be derived from the ones included in the definition of $Y$. For example, we have
$$
\dim(D(e_i))+\dim(D(e_{i+1}+e_{i+2}))=\dim(D(e_i)+D(e_{i+1}+e_{i+2}))\leq \dim\left(\sum_{i=1}^3D(e_i)\right).
$$
This means that we should have $\alpha_i+\gamma_i\leq \beta$, for $1\leq i\leq 3$. This is a consequence of the conditions we gave for $Y$ because
$$
\beta \geq \alpha_{i+1}+\alpha_{i+2}-\gamma_i\geq \alpha_{i+1}+\gamma_{i+1},\; 1\leq i \leq 3,
$$
since we also impose the condition $\alpha_{i+2}\geq \gamma_{i}+\gamma_{i+1}$. 
\end{rem}

Theorem \ref{T:h3} can also be used to give a bound for the GHWs in the case $s=2$, $h=3$, as the next result shows. In this case, we denote $[r]^{3,1}=[r]^3\times [r]$. 

\begin{cor}\label{C:h3s2}
Let $C_2\subset C_1\subset \fq^n$, $C=[C_1,C_2]\cdot A$, for some $2\times 3$ NSC matrix $A$. Let 
$$
{
    Y_3' = \left\{(\alpha,\beta)\in [r]^{3,1} : \begin{array}{c}
        r-\dim C_1\leq  \alpha_i ,\; 1\leq i\leq 3\\
        \alpha_{i+1}+\alpha_{i+2}\leq \beta,\; 1\leq i \leq 3 \\
        \beta \leq \min \left\{ \displaystyle \sum_{i=1}^3\alpha_i,\dim C_2+\min \{\alpha_i,1\leq i\leq 3\}\right\}\\
        \end{array} \hspace{-0.1cm}\right\}.
}
$$
For $(\alpha,\beta)\in Y_3'$ we consider
$$
B_{\alpha,\beta}=\sum_{i=1}^3\max \{d_{r-\alpha_i}(C_1),d_{\beta-\alpha_i}(C_2)\}. 
$$
Then we have
$$
d_r(C)\geq \min_{(\alpha,\beta)\in Y_3'}B_{\alpha,\beta}.
$$
\end{cor}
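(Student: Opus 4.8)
The plan is to deduce this corollary directly from Theorem \ref{T:h3}, by viewing the $s=2$, $h=3$ situation as a degenerate instance of the $s=h=3$ situation in which the third constituent code is taken to be the zero code. Concretely, I would set $C_3:=\{0\}$, so that $C_3\subset C_2\subset C_1$ is a nested chain of three codes, and extend the given $2\times 3$ NSC matrix $A$ to a $3\times 3$ NSC matrix $\tilde A$ by appending a suitable third row.

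First I would check that such an extension exists. The NSC conditions for $\tilde A$ at levels $t=1$ and $t=2$ involve only its first two rows, which coincide with $A$, and hence hold because $A$ is NSC; the only new requirement is at level $t=3$, namely that the single minor $\det\tilde A\neq 0$. Since $A$ is NSC it has full rank $2$, so its two rows span a $2$-dimensional subspace of $\F_q^3$, and any third row lying outside this span makes $\det\tilde A\neq 0$; thus $\tilde A$ is NSC (Example \ref{ex:vandermonde} illustrates that such matrices exist in abundance).

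Next I would observe that, by Definition \ref{d:mpc}, forcing $v_3=0$ kills the contribution of the third row of $\tilde A$, so that $[C_1,C_2,\{0\}]\cdot\tilde A=[C_1,C_2]\cdot A=C$, and the two codes have the same dimension $\dim C_1+\dim C_2$. Consequently Theorem \ref{T:h3} applies to $C$ with the chain $C_3\subset C_2\subset C_1$ and the matrix $\tilde A$, yielding $d_r(C)\geq \min_{(\alpha,\gamma,\beta)\in Y_3}B_{\alpha,\gamma,\beta}$.

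Finally I would specialize $Y_3$ and $B_{\alpha,\gamma,\beta}$ to $\dim C_3=0$. The constraint $\gamma_i\leq\dim C_3=0$ forces $\gamma=0$, and once $\gamma$ is eliminated the remaining three families of inequalities defining $Y_3$ reduce verbatim to those defining $Y_3'$. Moreover $B_{\alpha,0,\beta}=\sum_{i=1}^3\max\{d_{r-\alpha_i}(C_1),d_{\beta-\alpha_i}(C_2),d_0(\{0\})\}=B_{\alpha,\beta}$, since $d_0(\{0\})=0$ and the GHWs are nonnegative. Hence $\min_{Y_3}B_{\alpha,\gamma,\beta}=\min_{Y_3'}B_{\alpha,\beta}$, and the stated bound follows. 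I do not expect any serious obstacle here: the only point requiring a genuine, if short, argument is the existence of the NSC extension $\tilde A$, while everything else is a matter of tracking how the defining conditions of $Y_3$ collapse when $\dim C_3=0$.
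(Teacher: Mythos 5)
Your proof is correct and follows essentially the same route as the paper, whose entire argument is ``this can be obtained directly from Theorem \ref{T:h3} by setting $C_3=\{0\}$.'' Your explicit verification that the $2\times 3$ NSC matrix extends to a $3\times 3$ NSC matrix (and that the conditions of $Y_3$ collapse to those of $Y_3'$ when $\gamma=0$) supplies a detail the paper leaves implicit, and it is carried out correctly.
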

\begin{proof}
This can be obtained directly from Theorem \ref{T:h3} by setting $C_3=\{0\}$.  
\end{proof}

\begin{ex}\label{ex:h3s2}
Let $q=4$ and $n=4$. In this example (and throughout the rest of the paper) we denote by $\RS(k)$ the Reed-Solomon code of length $n$ and dimension $k$. Note that, by Remark \ref{r:ghwMDS}, we know the GHWs of Reed-Solomon codes. Let $k_1=3$ and $k_2=1$. We will compute the bound from Corollary \ref{C:h3s2} for the code $C=[\RS(k_1),\RS(k_2)]\cdot A$ and $r=2$, where
$$
A=\begin{pmatrix}
    1&a&1\\
    1&1&0
\end{pmatrix},
$$
and where $a$ is a primitive element of $\F_4$. We start by computing $Y_3'$. First, we have $0\leq \alpha_i \leq r=2$, for $1\leq i \leq 3$.  For $\beta$, we have the conditions $\alpha_{i+1}+\alpha_{i+2}\leq \beta$, for $1\leq i \leq 3$, and $\beta \leq \min \{\sum_{i=1}^3\alpha_i,1+\min\{\alpha_i,1\leq i \leq 3\}\}$, besides the condition $\beta\leq r=2$. It is straightforward to check that $\{(0,0,0)\}\times \{0\}\in Y_3'$. If we consider $\alpha=(1,0,0)$, then, looking at the conditions for $\beta$, this implies $\beta=1$, and we have $\{(1,0,0)\}\times \{1\}\in Y_3'$. Similarly, we have $\{(0,1,0)\}\times \{1\},\{(0,0,1)\}\times \{1\}\in Y_3'$. Finally, if we take $\alpha=(1,1,1)$, this implies $\beta=2$ and $\{(1,1,1)\}\times \{ 2\}\in Y_3'$. In fact, one can check that these are all the elements of $Y_3'$. For example, if we have $\alpha=(1,1,0)$, then we must also have $\alpha_1+\alpha_2=2\leq \beta$, but $\beta \leq 1+\min\{\alpha_i,1\leq i \leq 3\}=1$, a contradiction. A similar reasoning applies to $\alpha=(1,0,1)$ or $\alpha=(0,1,1)$, and also for the cases where $\alpha_i=2$ for some $1\leq i \leq 3$. 

Therefore, we have
$$
Y_3'=\{ \{(0,0,0)\}\times \{0\},\{(1,0,0),(0,1,0),(0,0,1) \} \times \{1\},\{(1,1,1)\}\times \{ 2\}\}. 
$$
Now we compute $B_{\alpha,\beta}$, for each $(\alpha,\beta)\in Y_3'$:
$$
\begin{aligned}
    &B_{(0,0,0),0}=3d_2(\RS(k_1))=3(n-k_1+2)=9,\\
    &B_{(1,0,0),1}=B_{(0,1,0),1}=B_{(0,0,1),1}=d_1(\RS(k_1))+2\max \{d_2(\RS(k_1)),d_1(\RS(k_2)) \}=10, \\
    &B_{(1,1,1),1}=3d_1(\RS(k_2))=3(n-k_2+1)=12.
\end{aligned}
$$
Hence, we obtain
$$
d_2(C)\geq \min_{(\alpha,\beta)\in Y_3'}B_{\alpha,\beta}=9.
$$
It can be checked with a computer that this is the true value of $d_2(C)$. 
\end{ex}

\section{An upper bound for the GHWs}\label{S:upperbound}
In this section we give an upper bound for the GHWs of MPCs, complementing the previous section, as this will allow us to ensure that our bound is sharp when both bounds coincide. For this result, we do not require $A$ to be NSC. We recall that $R_\ell=(a_{\ell 1},\dots,a_{\ell h})$ is the $\ell$-th row of $A$, for $1\leq \ell \leq s$; $\delta_\ell$ is the minimum distance of the code $C_{R_\ell}$ generated by $\{ R_1,\dots,R_\ell\} $; and $A_{\ell}$ is the matrix formed by the first $\ell$ rows of $A$. The proof of the following result is a generalization of the proof in \cite[Thm. 1]{decodingMPC} for the minimum distance. 

\begin{prop}\label{P:upperbound}
Let $C_s\subset \cdots \subset C_1\subset \fq^n$, and $C=[C_1,\dots,C_s]\cdot A$, where $A\subset \F_q^{s\times h}$ has full rank. Let $1\leq r \leq \dim C_1$ and let $1\leq \ell \leq s$ be such that $r\leq \dim C_\ell$. Then
$$
d_r(C)\leq d_r(C_\ell) \delta_\ell. 
$$
\end{prop}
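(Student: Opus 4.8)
The plan is to prove this upper bound by exhibiting an explicit $r$-dimensional subcode $D\subset C$ whose support has size exactly $d_r(C_\ell)\delta_\ell$; the definition of the GHW then yields $d_r(C)\leq\abs{\supp(D)}=d_r(C_\ell)\delta_\ell$ at once. This is the natural generalization of the minimum-distance argument in \cite[Thm. 1]{decodingMPC}: there one combines a single minimum-weight codeword of $C_\ell$ with a minimum-weight codeword of $C_{R_\ell}$, and the task here is to promote that single codeword to a full $r$-dimensional subcode while keeping control of the support.

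First I would fix the two ingredients. Let $w=(w_1,\dots,w_h)\in C_{R_\ell}$ be a nonzero codeword of weight $\delta_\ell$, and write $w=\sum_{j=1}^\ell \lambda_j R_j$ with $\lambda_j\in\F_q$, so that $w_i=\sum_{j=1}^\ell \lambda_j a_{ji}$ for each $i$. Since $r\leq \dim C_\ell$, there is a subcode $D'\subset C_\ell$ with $\dim D'=r$ and $\abs{\supp(D')}=d_r(C_\ell)$. Next I would introduce the linear map $\phi\colon C_\ell\to C$ defined by $\phi(u)=(w_1 u,\dots,w_h u)$. This lands in $C$ because it is precisely the codeword $[\lambda_1 u,\dots,\lambda_\ell u,0,\dots,0]\cdot A$, which is well defined thanks to the nestedness $u\in C_\ell\subset C_j$ for all $j\leq \ell$, and whose $i$-th block equals $\bigl(\sum_{j=1}^\ell \lambda_j a_{ji}\bigr)u=w_i u$. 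The map $\phi$ is injective: if $\phi(u)=0$ then $w_i u=0$ for every $i$, and picking $i$ with $w_i\neq 0$ (such $i$ exists since $w\neq 0$) forces $u=0$. Hence $D:=\phi(D')$ is a subcode of $C$ with $\dim D=r$.

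It then remains to compute $\abs{\supp(D)}$ block by block, via the splitting in (\ref{eq:sumasupport}). For a block $i$ with $w_i=0$, every codeword of $D$ has zero $i$-th block, so $\supp_i(D)=\emptyset$. For a block $i$ with $w_i\neq 0$, the $i$-th block of $\phi(u)$ is $w_i u$, and as $u$ runs over $D'$ the vectors $w_i u$ run over $w_i D'=D'$ (because $w_i$ is a nonzero scalar); thus the coordinates hit inside block $i$ are exactly $\supp(D')$, giving $\abs{\supp_i(D)}=d_r(C_\ell)$. Summing over the $\delta_\ell$ blocks with $w_i\neq 0$ produces $\abs{\supp(D)}=d_r(C_\ell)\,\delta_\ell$, and the definition of $d_r(C)$ completes the proof.

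The construction is short, so there is no single heavy computation; the points demanding care are that tensoring with the fixed nonzero vector $w$ preserves dimension (used to ensure $\dim D=r$) and that the block-wise support of the \emph{subcode} $D$, rather than of individual codewords, equals $\supp(D')$ on each active block. Both hinge on $w_i$ being an invertible scalar, so $w_i D'=D'$. The only place the hypotheses are genuinely needed is the verification $\phi(D')\subset C$, which rests on the nestedness $C_s\subset\cdots\subset C_1$; the full-rank assumption on $A$ is used merely to guarantee that $C_{R_\ell}$ and hence $\delta_\ell$ behave as expected.
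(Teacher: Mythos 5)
Your proposal is correct and follows essentially the same route as the paper: both construct the $r$-dimensional subcode $D$ by multiplying a fixed $r$-dimensional subcode of $C_\ell$ realizing $d_r(C_\ell)$ by the coordinates of a minimum-weight codeword $f=\sum_{j=1}^\ell\lambda_j R_j$ of $C_{R_\ell}$, and then count the support block by block. Your explicit injectivity check for $\phi$ (via a nonzero coordinate $w_i$) is a slightly more direct justification of $\dim D=r$ than the paper's appeal to the full rank of $A$, but the construction and the support computation are identical.
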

\begin{proof}
Let $1\leq \ell\leq s$ be such that $r\leq \dim C_\ell$. We will obtain a subcode $D\subset C$ with $\dim D=r$ and $\abs{\supp(D)}=d_r(C_\ell) \delta_\ell$. First, we consider a subcode $D_\ell\subset C_\ell$ with $\dim D_\ell=r$ and $\abs{\supp(D_\ell)}=d_r(C_\ell)$. Let $f=\sum_{j=1}^\ell\lambda_j R_j$, with $\lambda_j\in \F_q$, be a codeword of $C_{R_\ell}$ with $\wt(f)=\delta_\ell$. Then we claim that
$$
D:=\{[\lambda_1 v_1,\dots,\lambda_\ell v_\ell,v_{\ell+1},\dots,v_s]\cdot A: v_1=v_2=\cdots =v_\ell \in D_\ell, v_{\ell+1}=v_{\ell+2}=\cdots=v_s=0 \}
$$
is a subcode of $C$ with $\dim D=r$ and $\abs{\supp(D)}= d_r(C_\ell)\cdot \delta_\ell$. It is clear that $D\subset C$ because $D_\ell\subset C_\ell\subset\cdots \subset C_1$, and $\dim D=r$ since $A$ has full rank. Let $v\in D_\ell$, then
$$
[\lambda_1 v,\dots,\lambda_\ell v]\cdot A_{\ell}=\left(\sum_{j=1}^\ell a_{j1}\lambda_j v,\dots, \sum_{j=1}^\ell a_{jh}\lambda_j v \right)=(vf_1,\dots,vf_h),
$$
where $f=(f_1,\dots,f_h)\in \F_q^h$, that is, $f_i$ is the $i$-th coordinate of $f$, for $1\leq i \leq h$. Hence,
$$
D=\{(vf_1,\dots,vf_h)\in C: v\in D_\ell \}.
$$
From this expression and the fact that $\abs{\supp(D_\ell)}=d_r(C_\ell)$, we obtain
$$
\abs{\supp_j(D)}=
\begin{cases}
   d_r(C_\ell) & \text{ if } f_j\neq 0,\\
   0 & \text{ if }f_j=0.
\end{cases}
$$
Since $\wt(f)=\delta_\ell$, we have $\abs{\supp(D)}=d_r(C_\ell)\cdot \delta_\ell$. 
\end{proof}

\begin{rem}
In the previous result, if $A$ is NSC, then by \cite[Prop. 7.2]{blackmoreMPC} we have $\delta_\ell=(h-\ell+1)$, for $1\leq \ell \leq s$. Moreover, if $A$ is triangular (that is, a column permutation of an upper triangular matrix), then the previous result holds even if the codes are not nested (this was already known to be true for the minimum distance \cite[Thm. 3.7]{blackmoreMPC}). Indeed, we just need to consider
$$
D':=\{[v_1,\dots,v_s]\cdot A: v_\ell \in D_\ell,\; v_{j}=0 \text{ if } j\neq \ell\},
$$
where we take $D_\ell$ as in the proof of Proposition \ref{P:upperbound}. 
Since $A$ is triangular, we have
$$
D'=\{ (a_{\ell1}v,\dots,a_{\ell h}v): v\in D_\ell\},
$$
where $a_{\ell j}$ is nonzero for exactly $h-\ell+1$ values of $j$, which implies $\abs{\supp(D)}=d_r(C_\ell)\cdot \delta_\ell$.
\end{rem}

Note that the previous result does not provide any upper bound if $r>\dim C_1$, and, when $r=\dim C_1$, it only gives $d_r(C)\leq h\cdot n=N$, which cannot be sharp if $\dim C_2\geq 1$ due to the monotony of the GHWs. This contrasts with the case of the minimum distance ($r=1$), where one gets that the minimum of the bounds provided in Proposition \ref{P:upperbound} is always sharp \cite[Thm. 1]{decodingMPC}. Nevertheless, for lower values of $r$, this bound performs well, as we see in the following example (and as we will see in Theorem \ref{T:GHWsRS}). 

\begin{ex}
Using the setting from Example \ref{ex:h3s2}, from Proposition \ref{P:upperbound}, we obtain
$$
d_2(C)\leq 3d_2(\RS(k_1))=9.
$$
Thus, from this we can also deduce that the bound given in Example \ref{ex:h3s2} is sharp. 
\end{ex}

\section{Examples for particular families of codes}\label{S:examples}

We start by considering Reed-Solomon codes $\RS(k)$ with dimension $k$ and length $n\leq q$, for which we know the GHWs from Remark \ref{r:ghwMDS}. In what follows, we denote 
\begin{equation}\label{eq:drRS}
d_r(\RS(k))=\begin{cases}
    0 & \text{ if } r=0 \\
    n-k+r & \text{ if } 1\leq r \leq k, \\ 
    \infty & \text{ if }k< r.
\end{cases}
\end{equation}

\begin{thm}\label{T:GHWsRS}
Let $1\leq k_2\leq k_1\leq n \leq q$, let $A$ be $2\times 2$ NSC matrix over $\fq$, and let $\RS(k_1,k_2):=[\RS(k_1),\RS(k_2)]\cdot A$. For $1\leq r\leq \dim \RS(k_1,k_2)=k_1+k_2$, we have
$$
d_r(\RS(k_1,k_2))= 
\begin{cases}
    2n+r-(k_1+k_2) & \text{ if } r>\max\{k_1-k_2,k_2\}, \\
    \min \{2 d_r(\RS(k_1)),d_r(\RS(k_2))\} & \text{ if } r\leq \max\{k_1-k_2,k_2\}.\\
\end{cases}
$$
\end{thm}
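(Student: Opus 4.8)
The plan is to sandwich $d_r(\RS(k_1,k_2))$ between the lower bound of Corollary \ref{C:h2} and the upper bounds coming from Corollary \ref{C:singletongeneralizada} and Proposition \ref{P:upperbound}, and to check that both sides agree with the claimed value $V_r$ in each of the two regimes. Throughout I use the explicit values of $d_m(\RS(k))$ recorded in (\ref{eq:drRS}), together with $N=2n$, $K=k_1+k_2$, and $\delta_1=2$, $\delta_2=1$, the latter coming from the NSC hypothesis via $\delta_\ell=h-\ell+1$.

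For the upper bound I would argue as follows. When $r>\max\{k_1-k_2,k_2\}$, the generalized Singleton bound (Corollary \ref{C:singletongeneralizada}) gives directly $d_r(\RS(k_1,k_2))\leq N-K+r=2n+r-(k_1+k_2)=V_r$. When $r\leq \max\{k_1-k_2,k_2\}$ one has $r\leq k_1$, so Proposition \ref{P:upperbound} with $\ell=1$ yields $d_r(\RS(k_1,k_2))\leq 2\,d_r(\RS(k_1))$; if moreover $r\leq k_2$, then $\ell=2$ yields $d_r(\RS(k_1,k_2))\leq d_r(\RS(k_2))$, while if $r>k_2$ then $d_r(\RS(k_2))=\infty$ by (\ref{eq:drRS}) and the $\ell=1$ bound already equals the claimed minimum. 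This produces $d_r(\RS(k_1,k_2))\leq \min\{2d_r(\RS(k_1)),d_r(\RS(k_2))\}=V_r$ in the second regime.

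The core is the lower bound, where by Corollary \ref{C:h2} the task reduces to showing $B_{\alpha_1,\alpha_2}\geq V_r$ for every $(\alpha_1,\alpha_2)\in Y_2$, the set $Y_2$ being nonempty since $(\max\{0,r-k_1\},\max\{0,r-k_1\})\in Y_2$. The constraints $r-k_1\leq \alpha_i\leq k_2$ guarantee that each $d$-value occurring in $B_{\alpha_1,\alpha_2}$ is finite. I would split $Y_2$ into interior points, for which some $\alpha_i\in\{1,\dots,r-1\}$, and the three boundary points $(0,0),(r,0),(0,r)$, which (because $\alpha_1+\alpha_2\leq r$) are the only points of $Y_2$ avoiding this. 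For an interior point, choosing in each maximum the summand indexed by the coordinate lying in $\{1,\dots,r-1\}$ gives $B_{\alpha_1,\alpha_2}\geq (n-k_2+\alpha_i)+(n-k_1+r-\alpha_i)=2n+r-(k_1+k_2)$, the required index ranges following from the $Y_2$ constraints. A direct evaluation at the boundary points, using $k_2\le k_1$ to resolve the maximum in the second one, gives $B_{0,0}=2d_r(\RS(k_1))$ and $B_{r,0}=B_{0,r}=d_r(\RS(k_2))$.

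Finally I would combine these via two elementary comparisons: $d_r(\RS(k_2))=n-k_2+r\leq 2n+r-(k_1+k_2)$ always, since $k_1\leq n$, whereas $2d_r(\RS(k_1))\geq 2n+r-(k_1+k_2)$ exactly when $r\geq k_1-k_2$. In the first regime ($r>k_1-k_2$ and $r>k_2$) the boundary points $(r,0),(0,r)$ are excluded from $Y_2$, the point $(0,0)$ gives a value $\geq V_r$, and the interior points give exactly $V_r=2n+r-(k_1+k_2)$, so $\min_{Y_2}B_{\alpha_1,\alpha_2}\geq V_r$. In the second regime the comparisons show $V_r\leq 2n+r-(k_1+k_2)$, so the interior points clear $V_r$, while the two boundary values $2d_r(\RS(k_1))$ and $d_r(\RS(k_2))$ are precisely the two terms of the minimum defining $V_r$ and hence also $\geq V_r$. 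In both regimes the lower bound of Corollary \ref{C:h2} then meets the upper bound and equality follows. I expect the main obstacle to be the bookkeeping around the degenerate values of $d_m(\RS(k))$ at $m=0$ and $m>k$, and pinning down exactly which boundary points of $Y_2$ occur together with their $B$-values; the two comparison inequalities above are what tie the two regimes together.
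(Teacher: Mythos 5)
Your proposal is correct and follows essentially the same route as the paper: the lower bound comes from Corollary \ref{C:h2} by analyzing $B_{\alpha_1,\alpha_2}$ over $Y_2$ split into interior and corner points, and the upper bound from Proposition \ref{P:upperbound} together with the generalized Singleton bound. The only (welcome) difference is local: for interior points you bound each maximum below by the summand indexed by a coordinate $\alpha_i\in\{1,\dots,r-1\}$, getting $2n+r-(k_1+k_2)$ in one line, whereas the paper reaches the same conclusion via a monotonicity analysis of $B$ as a function of $z=\alpha_1+\alpha_2$ (plus a separate case $\alpha_1=0$, $0<\alpha_2<r$), which your argument subsumes.
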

\begin{proof}
Let $\alpha_i\neq 0$, $\alpha_i\neq r$, for $1\leq i \leq 2$. First, we give a lower bound for $d_r(\RS(k_1,k_2))$ using Corollary \ref{C:h2}. By (\ref{eq:drRS}) we have
$$
B_{\alpha_1,\alpha_2}=\sum_{i=1}^2\max \{ n-k_1+r-\alpha_i,n-k_2+\alpha_{i+1}\},
$$
where $i+1$ is understood to be $i+1\bmod 2$. This can be expressed as
\begin{equation}\label{Balfa}
B_{\alpha_1,\alpha_2}=\begin{cases}
    2(n-k_1+r)-(\alpha_1+\alpha_2) & \text{ if } r\geq k_1-k_2+\alpha_1+\alpha_2,\\
    2(n-k_2)+\alpha_1+\alpha_2 & \text{ if } r< k_1-k_2+\alpha_1+\alpha_2.\\
\end{cases}
\end{equation}
We now study the minimum of $B_{\alpha_1,\alpha_2}$ for all $(\alpha_1,\alpha_2)\in Y_2$, with $\alpha_i\neq 0$, $\alpha_i\neq r$, using this expression. Recall that
$$
Y_2 = \left\{(\alpha_1,\alpha_2)\in [r]^2: \begin{array}{c}
        r-\dim C_1 \leq  \alpha_i \leq\dim C_2, \; 1\leq i\leq 2 \\
        \alpha_1+\alpha_2\leq r
        \end{array}
        \right\}.
$$
Let $\xi:=r-(k_1-k_2)$, and $z=\alpha_1+\alpha_2$. Consider $(\alpha_1,\alpha_2)\in Y_2$ with $\alpha_i\neq 0$, $\alpha_i\neq r$. Then we can rewrite (\ref{Balfa}) as 
\begin{equation*}\label{Bzeta}
B(z):=B_{\alpha_1,\alpha_2}=\begin{cases}
    2(n-k_2)+z & \text{ if } z> \xi, \\
    2(n-k_1+r)-z & \text{ if } z\leq \xi.\\
\end{cases}
\end{equation*}
As a function of $z$, we see that $B(z)$ is an increasing function for $z>\xi$ and a decreasing function for $z\leq \xi$. Thus, the minimum for $(\alpha_1,\alpha_2)\in Y_2$, $\alpha_i\neq 0$, $\alpha_i\neq r$, is always greater than or equal to 
$$
B(\xi)=2n+r-(k_1+k_2).
$$

Now we study the minimum of $B_{\alpha_1,\alpha_2}$ for $(\alpha_1,\alpha_2)\in Y_2$,  $\alpha_1=0$, $0<\alpha_2<r$. As before, we can write
$$
B_{0,\alpha_2}=\begin{cases}
    2(n-k_1+r)-\alpha_2 & \text{ if } r\geq k_1-k_2+\alpha_2,\\
    2n+r-(k_1+k_2) & \text{ if } r< k_1-k_2+\alpha_2.\\
    \end{cases}
$$
As a function of $\alpha_2$, this is constant for $\alpha_2>r-(k_1-k_2)$, and it is decreasing for $\alpha_2\leq r-(k_1-k_2)$. The minimum over $\alpha_2$, with $0<\alpha_2<r$, is greater than or equal to
$$
B_{0,r-(k_1-k_2)}=2n+r-(k_1+k_2)=B(\xi).
$$
The only cases left to check are $(\alpha_1,\alpha_2)=(0,0)$ and $(\alpha_1,\alpha_2)=(0,r)$, if they are in $Y_2$ (the rest of the cases are also covered by symmetry between $\alpha_1$ and $\alpha_2$). We have
$$
B_{0,0}=2(n-k_1+r)=2d_r(C_1),\; B_{0,r}=n-k_2+r=d_r(C_2). 
$$
Note that $(0,0)\in Y_2$ if and only if $r-k_1\leq 0$, and $(0,r)\in Y_2$ if and only if $r-k_1\leq 0$ and $r\leq k_2$ (this last condition implies $r\leq k_1$). It is straightforward to check that $B(\xi)\leq B_{0,0}$ if and only if $r\geq k_1-k_2$, $B_{0,r}\leq B(\xi)$ always (but $(0,r)\in Y_2$ only if $r\leq k_2$), and $B_{0,0}\leq B_{0,r}$ if and only if $r\leq k_1-k_2-(n-k_1)$. Therefore, by Corollary \ref{C:h2}, we obtain
\begin{equation}\label{eq:lowerRS}
d_r(\RS(k_1,k_2))\geq 
\begin{cases}
    B(\xi) & \text{ if } r>\max\{k_1-k_2,k_2\}, \\
    d_r(C_2) & \text{ if } k_1-k_2\leq k_2 \text{ and } k_1-k_2 < r \leq k_2,\\
    2d_r(C_1) & \text{ if } k_1-k_2 > k_2 \text{ and } k_2< r \leq k_1-k_2,\\
    d_r(C_2) & \text{ if } k_1-k_2-(n-k_1)< r \leq \min \{k_1-k_2,k_2\},\\
    2d_r(C_1) & \text{ if } r\leq k_1-k_2-(n-k_1). \\
\end{cases}
\end{equation}
It is straightforward to check that this lower bound is equal to the formula in the statement of the result (with the notation from (\ref{eq:drRS})). By Proposition \ref{P:upperbound} and Corollary \ref{C:singletongeneralizada}, the previous bound is sharp for $1\leq r \leq \dim \RS(k_1,k_2)$. 
\end{proof}

\begin{rem}
Note that the previous result shows that $\RS(k_1,k_2)$ is $t$-MDS, for $t=\max\{k_1-k_2,k_2\}$. Also note that the proof of Theorem \ref{T:GHWsRS} also works for any pair of MDS codes $C_1,C_2$ with dimensions $\dim C_1=k_1,\dim C_2=k_2$, such that $C_2\subset C_1$. 
\end{rem}

We turn our attention now to the family of Reed-Muller codes, which is closely related to MPCs, as we see next. We denote by $\RM_q(\nu,m)$ the Reed-Muller code of degree $\nu$ in $m$ variables over $\F_q$. We take $\F_q=\{\alpha_1,\dots,\alpha_q\}$. Let 
$$
\binom{\alpha_j}{\alpha_i}:=\frac{(\alpha_j-\alpha_1)\cdots (\alpha_j-\alpha_{i-1})}{(\alpha_i-\alpha_1)\cdots (\alpha_i-\alpha_{i-1})}, 
$$
where we understand that if $i=1$ or $i=j$ then $\binom{\alpha_j}{\alpha_i}=1$, and $\binom{\alpha_j}{\alpha_i}=0$ if and only if $1\leq j \leq i-1$. We consider the matrix
$$
\GRM_q:=\begin{pmatrix}
    \binom{\alpha_1}{\alpha_1} & \binom{\alpha_2}{\alpha_1} &\cdots & \binom{\alpha_q}{\alpha_1} \\
    \binom{\alpha_1}{\alpha_2} & \binom{\alpha_2}{\alpha_2} &\cdots & \binom{\alpha_q}{\alpha_2} \\
    \vdots & \vdots & \ddots & \vdots \\
    \binom{\alpha_1}{\alpha_q} & \binom{\alpha_2}{\alpha_q} &\cdots & \binom{\alpha_q}{\alpha_q} \\
\end{pmatrix}.
$$
In \cite[Section 5]{blackmoreMPC}, the authors prove that $\GRM_q$ is NSC, and they also prove the following result.

\begin{thm}
The Reed-Muller codes can be recursively defined by 
$$
\RM_q(\nu,0)=\begin{cases}
    \{0 \} & \text{ if } r<0, \\
    \F_q & \text{ if } r\geq 0,
\end{cases}
$$
and for $m\geq 1$
\begin{equation}\label{eq:recursiveRM}
\RM_q(\nu,m)=[\RM_q(\nu,m-1),\cdots,\RM_q(\nu-q+1,m-1)]\cdot \GRM_q. 
\end{equation}
\end{thm}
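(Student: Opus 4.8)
The plan is to prove the recursive formula \eqref{eq:recursiveRM} by giving an explicit bijective correspondence between evaluation vectors of polynomials and the matrix-product construction, exploiting the combinatorial properties of the matrix $\GRM_q$. First I would recall that $\RM_q(\nu,m)$ is, by definition, the image of the evaluation map sending a polynomial $f\in\F_q[x_1,\dots,x_m]$ of total degree at most $\nu$ to its vector of values on all points of $\A^m(\F_q)$, with the $q^m$ points ordered so that they group naturally into $q$ blocks of length $q^{m-1}$ according to the value of the last coordinate $x_m$. The key idea is that any $f$ of degree at most $\nu$ can be written via a (finite-field) Taylor-type expansion in the variable $x_m$ around the field elements, so that its restriction to the hyperplane $\{x_m=\alpha_j\}$ is a polynomial in $x_1,\dots,x_{m-1}$ whose degree is controlled by $\nu$ minus the order of vanishing.

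Next I would make precise the role of the entries $\binom{\alpha_j}{\alpha_i}$. These are the finite-difference (Newton) interpolation coefficients: the $\ell$-th row of $\GRM_q$ records the coefficients needed to express the value of a degree-$(\ell-1)$ univariate interpolation datum, and the vanishing pattern $\binom{\alpha_j}{\alpha_i}=0$ for $j<i$ makes $\GRM_q$ upper-triangular in the appropriate sense (and NSC, as cited from \cite[Section 5]{blackmoreMPC}). Writing $f$ in the Newton basis $\prod_{t<i}(x_m-\alpha_t)$ with coefficients $g_i(x_1,\dots,x_{m-1})$, the degree bound on $f$ forces $\deg g_i\le \nu-(i-1)$, i.e.\ $g_i$ evaluates into $\RM_q(\nu-i+1,m-1)$. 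The $j$-th block of the evaluation of $f$ (restriction to $x_m=\alpha_j$) is then exactly $\sum_{i}\binom{\alpha_j}{\alpha_i}\,\ev(g_i)$, which is precisely the $j$-th coordinate of $[\,\ev(g_1),\dots,\ev(g_q)\,]\cdot\GRM_q$ in the sense of \eqref{eq:codeword}. This establishes containment of $\RM_q(\nu,m)$ in the matrix-product code, and surjectivity follows because the Newton expansion is a bijection between polynomials of degree $\le\nu$ and tuples $(g_1,\dots,g_q)$ with the stated degree constraints.

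The main obstacle I anticipate is verifying that the degree bookkeeping matches exactly: one must check that total degree at most $\nu$ in $m$ variables corresponds, coefficient by coefficient in the $x_m$-expansion, to the degrees $\nu, \nu-1,\dots,\nu-q+1$ appearing in the constituent codes $\RM_q(\nu,m-1),\dots,\RM_q(\nu-q+1,m-1)$ of \eqref{eq:recursiveRM}. The subtlety is that over $\F_q$ the monomials $x_m^i$ for $i\ge q$ do not give new functions (since $x_m^q=x_m$ as a function), so the Newton basis $\{\prod_{t<i}(x_m-\alpha_t)\}_{i=1}^{q}$ is exactly the right object: it has $q$ elements and spans all functions in $x_m$, which is why the matrix has $q$ rows and the recursion truncates at $\nu-q+1$. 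I would handle the degenerate cases $\nu<0$ and the base case $m=0$ separately, matching them against the stated convention for $\RM_q(\nu,0)$, and then the inductive step is purely the interpolation identity together with the degree count, with no further codimension estimates required.
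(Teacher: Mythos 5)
The paper does not actually prove this theorem---it quotes it from \cite[Section 5]{blackmoreMPC}---and your Newton-interpolation argument is essentially the proof given in that reference: the entries $\binom{\alpha_j}{\alpha_i}$ of $\GRM_q$ are the normalized Newton basis polynomials $\prod_{t<i}(x_m-\alpha_t)/\prod_{t<i}(\alpha_i-\alpha_t)$ evaluated at $\alpha_j$, and the block-by-block identification with the matrix-product expression \eqref{eq:codeword} is exactly as you describe. Your sketch is correct, including the one point that genuinely needs checking, namely that the triangular change of basis between the monomial and Newton bases in $x_m$ (applied to reduced polynomials, so that the expansion stops at $x_m^{q-1}$) converts the total-degree bound $\deg f\le\nu$ into the componentwise bounds $\deg g_i\le\nu-i+1$ and back.
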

For $q=2$ and $q=3$, we get
$$
\GRM_2=\begin{pmatrix}
    1&1\\
    0&1
\end{pmatrix}, \; 
\GRM_3=\begin{pmatrix}
    1&1&1\\
    0&1&2\\
    0&0&1
\end{pmatrix}.
$$
In particular, this recovers the well-known result that binary Reed-Muller codes can be constructed recursively using the $(u,u+v)$ construction.

Another important aspect of Reed-Muller codes in this context is that their GHWs are known \cite{pellikaanGHWRM}. Therefore, they provide a family in which to test our bounds, in particular Corollary \ref{C:h2} and Theorem \ref{T:h3}. For example, for $q=2$, we can bound the GHWs of $\RM_2(\nu,m)$ with Corollary \ref{C:h2} using the GHWs of $\RM_2(\nu,m-1)$ and $\RM_2(\nu-1,m-1)$, and we can check if the bound is sharp because we know the true values of the GHWs of $\RM_2(\nu,m)$. We can proceed similarly for the case of $q=3$ using Theorem \ref{T:h3}. Note that we can apply our results since $\GRM_q$ is NSC and $\RM_q(\nu_1,m)\subset \RM_q(\nu_2,m)$ if $\nu_1\leq \nu_2$, i.e., the codes in (\ref{eq:recursiveRM}) are nested. 

For example, for $q=2$, we have
$$
\RM_2(\nu,m)=\{ (u,u+v): u\in \RM_2(\nu,m-1), v\in \RM_2(\nu-1,m-1)\}.
$$
For $1\leq r \leq \dim \RM_2(\nu,m)$, the bound from Corollary \ref{C:h2} with $C_1=\RM_2(\nu,m-1)$ and $C_2=\RM_2(\nu-1,m-1)$ would be 
\begin{equation}\label{eq:boundRM}
d_r(\RM_2(\nu,m))\geq  \min_{(\alpha_1,\alpha_2)\in Y_2}B_{\alpha_1,\alpha_2},
\end{equation}
where 
$$
\begin{aligned}
B_{\alpha_1,\alpha_2}=&\max \{d_r-\alpha_1(\RM_2(\nu,m-1)),d_{\alpha_2}(\RM_2(\nu-1,m-1))\\
&+\max \{d_r-\alpha_2(\RM_2(\nu,m-1)),d_{\alpha_1}(\RM_2(\nu-1,m-1)),
\end{aligned}
$$
and
$$
Y_2 = \left\{(\alpha_1,\alpha_2)\in [r]^2: \begin{array}{c}
        r-\dim \RM_2(\nu,m-1) \leq  \alpha_1 \leq\dim \RM_2(\nu-1,m-1), \\
        r-\dim \RM_2(\nu,m-1) \leq  \alpha_2 \leq\dim \RM_2(\nu-1,m-1),\\
        \alpha_1+\alpha_2\leq r
        \end{array}
        \right\}.
$$
Since $d_r(\RM_2(\nu,m))$ is known from \cite{pellikaanGHWRM}, we can compute the bound from (\ref{eq:boundRM}) and check if whether it gives the true minimum distance or not. We have done this for any $2\leq m\leq 10$ and any degree $0\leq \nu \leq m(q-1)$, and the bound (\ref{eq:boundRM}) coincides with the corresponding GHW in all of those cases. This not only seems to indicate that the bound from Corollary \ref{C:h2} might be sharp for this family, but also showcases the fact that it can be computed efficiently even for large codes. 

For the case $q=3$, we have computed the bound from Theorem \ref{T:h3} for $2\leq m\leq 3$ variables, which also gives the true value of the corresponding GHW of $\RM_3(\nu,m)$, $1\leq \nu \leq m(q-1)$. Since this bound is more computationally intensive to compute than the one from Corollary \ref{T:h3}, is not feasible to compute it for every possible degree for a larger number of variables. Notwithstanding the foregoing, we have tested a wide range of degrees for $4$ and $5$ variables, and we did not find any case in which the bound did not coincide with the GHW.

\section*{Declarations}
\subsection*{Conflict of interest} The author declares that he has no conflict of interest.


\begin{thebibliography}{10}

\bibitem{munueraGHWhermitica}
A.~I. Barbero and C.~Munuera.
\newblock The weight hierarchy of {H}ermitian codes.
\newblock {\em SIAM J. Discrete Math.}, 13(1):79--104, 2000.

\bibitem{beelenGHWcartesian}
P.~Beelen and M.~Datta.
\newblock Generalized {H}amming weights of affine {C}artesian codes.
\newblock {\em Finite Fields Appl.}, 51:130--145, 2018.

\bibitem{blackmoreMPC}
T.~Blackmore and G.~H. Norton.
\newblock Matrix-product codes over {$\Bbb F_q$}.
\newblock {\em Appl. Algebra Engrg. Comm. Comput.}, 12(6):477--500, 2001.

\bibitem{eduardoGHWHyperbolic}
E.~Camps-Moreno, I.~Garc\'{\i}a-Marco, H.~H. L\'{o}pez, I.~M\'{a}rquez-Corbella, E.~Mart\'{\i}nez-Moro, and E.~Sarmiento.
\newblock On the generalized {H}amming weights of hyperbolic codes.
\newblock {\em Journal of Algebra and Its Applications}, 23(07):2550062, 2024.

\bibitem{sanjoseGHWNT}
E.~Camps-Moreno, H.~H. López, G.~L. Matthews, and R.~San-Jos\'e.
\newblock The weight hierarchy of decreasing norm-trace codes.
\newblock {\em Designs, Codes and Cryptography, to appear. ArXiv 2411.13375}, 2024.

\bibitem{liuMPChomogeneous1}
Y.~Fan, S.~Ling, and H.~Liu.
\newblock Homogeneous weights of matrix product codes over finite principal ideal rings.
\newblock {\em Finite Fields Appl.}, 29:247--267, 2014.

\bibitem{fanMPC}
Y.~Fan, S.~Ling, and H.~Liu.
\newblock Matrix product codes over finite commutative {F}robenius rings.
\newblock {\em Des. Codes Cryptogr.}, 71(2):201--227, 2014.

\bibitem{fengGHWsCyclic}
G.~L. Feng, K.~K. Tzeng, and V.~K. Wei.
\newblock On the generalized {H}amming weights of several classes of cyclic codes.
\newblock {\em IEEE Trans. Inform. Theory}, 38(3):1125--1130, 1992.

\bibitem{galindoMPCLRC}
C.~Galindo, F.~Hernando, C.~Munuera, and D.~Ruano.
\newblock Locally recoverable codes from the matrix-product construction.
\newblock {\em ArXiv 2310.15703}, 2023.

\bibitem{galindoQuantumMPC}
C.~Galindo, F.~Hernando, and D.~Ruano.
\newblock New quantum codes from evaluation and matrix-product codes.
\newblock {\em Finite Fields Appl.}, 36:98--120, 2015.

\bibitem{guruswammiGHWlistdecodingTensorInterleaved}
P.~Gopalan, V.~Guruswami, and P.~Raghavendra.
\newblock List decoding tensor products and interleaved codes.
\newblock {\em SIAM J. Comput.}, 40(5):1432--1462, 2011.

\bibitem{guruswammiGHWlistdecoding}
V.~Guruswami.
\newblock List decoding from erasures: bounds and code constructions.
\newblock {\em IEEE Trans. Inform. Theory}, 49(11):2826--2833, 2003.

\bibitem{pellikaanGHWRM}
P.~Heijnen and R.~Pellikaan.
\newblock Generalized {H}amming weights of {$q$}-ary {R}eed-{M}uller codes.
\newblock {\em IEEE Trans. Inform. Theory}, 44(1):181--196, 1998.

\bibitem{hellesethGHWLinearcodes}
T.~Helleseth, T.~Kl\o~ve, and O.~y. Ytrehus.
\newblock Generalized {H}amming weights of linear codes.
\newblock {\em IEEE Trans. Inform. Theory}, 38(3):1133--1140, 1992.

\bibitem{hellesethGHWcyclic}
T.~Helleseth, T.~Kl{\o}ve, and J.~Mykkeltveit.
\newblock The weight distribution of irreducible cyclic codes with block length {$n\sb{1}((q\sp{l}-1)/N)$}.
\newblock {\em Discrete Math.}, 18(2):179--211, 1977.

\bibitem{hernandoListDecodingMPC}
F.~Hernando, T.~H{\o}holdt, and D.~Ruano.
\newblock List decoding of matrix-product codes from nested codes: an application to quasi-cyclic codes.
\newblock {\em Adv. Math. Commun.}, 6(3):259--272, 2012.

\bibitem{decodingMPC}
F.~Hernando, K.~Lally, and D.~Ruano.
\newblock Construction and decoding of matrix-product codes from nested codes.
\newblock {\em Appl. Algebra Engrg. Comm. Comput.}, 20(5-6):497--507, 2009.

\bibitem{hernandoDecodingMPC2}
F.~Hernando and D.~Ruano.
\newblock Decoding of matrix-product codes.
\newblock {\em J. Algebra Appl.}, 12(4):1250185, 15, 2013.

\bibitem{janwaGHWCyclic}
H.~Janwa and A.~K. Lal.
\newblock On the generalized {H}amming weights of cyclic codes.
\newblock {\em IEEE Trans. Inform. Theory}, 43(1):299--308, 1997.

\bibitem{jitmanSelforthogonalMPCHerm}
S.~Jitman and T.~Mankean.
\newblock Matrix-product constructions for {H}ermitian self-orthogonal codes.
\newblock {\em Chamchuri J. Math.}, 9:35--51, 2017.

\bibitem{matsumotoRGRW}
J.~Kurihara, R.~Matsumoto, and T.~Uyematsu.
\newblock Relative generalized rank weight of linear codes and its applications to network coding.
\newblock {\em IEEE Trans. Inform. Theory}, 61(7):3912--3936, 2015.

\bibitem{matsumotoRGHW}
J.~Kurihara, T.~Uyematsu, and R.~Matsumoto.
\newblock Secret sharing schemes based on linear codes can be precisely characterized by the relative generalized hamming weight.
\newblock {\em IEICE Trans. Fundam. Electron. Commun. Comput. Sci.}, E95.A(11):2067--2075, 2012.

\bibitem{liuMPChomogeneous2}
H.~Liu and J.~Liu.
\newblock Homogeneous metric and matrix product codes over finite commutative principal ideal rings.
\newblock {\em Finite Fields Appl.}, 64:101666, 29, 2020.

\bibitem{luoMPCLRC}
G.~Luo, M.~F. Ezerman, and S.~Ling.
\newblock Three new constructions of optimal locally repairable codes from matrix-product codes.
\newblock {\em IEEE Trans. Inform. Theory}, 69(1):75--85, 2023.

\bibitem{LuoSymbolpairMPC}
G.~Luo, M.~F. Ezerman, S.~Ling, and X.~Pan.
\newblock New families of {MDS} symbol-pair codes from matrix-product codes.
\newblock {\em IEEE Trans. Inform. Theory}, 69(3):1567--1587, 2023.

\bibitem{jitmanSelforthogonalMPCEU}
T.~Mankean and S.~Jitman.
\newblock Matrix-product constructions for self-orthogonal linear codes.
\newblock In {\em 2016 12th International Conference on Mathematics, Statistics, and Their Applications (ICMSA)}, pages 6--10, 2016.

\bibitem{umbertoGHWandGRW}
U.~Mart\'inez-Pe\~nas.
\newblock On the similarities between generalized rank and {H}amming weights and their applications to network coding.
\newblock {\em IEEE Trans. Inform. Theory}, 62(7):4081--4095, 2016.

\bibitem{munueraGHWGoppa}
C.~Munuera.
\newblock On the generalized {H}amming weights of geometric {G}oppa codes.
\newblock {\em IEEE Trans. Inform. Theory}, 40(6):2092--2099, 1994.

\bibitem{existenceGRW}
F.~Oggier and A.~Sboui.
\newblock On the existence of generalized rank weights.
\newblock In {\em 2012 International Symposium on Information Theory and its Applications}, pages 406--410, 2012.

\bibitem{ferruhMPC}
F.~\"{O}zbudak and H.~Stichtenoth.
\newblock Note on {N}iederreiter-{X}ing's propagation rule for linear codes.
\newblock {\em Appl. Algebra Engrg. Comm. Comput.}, 13(1):53--56, 2002.

\bibitem{pellikaanlibro}
R.~Pellikaan, X.-W. Wu, S.~Bulygin, and R.~Jurrius.
\newblock {\em Codes, cryptology and curves with computer algebra}.
\newblock Cambridge University Press, Cambridge, 2018.

\bibitem{sanjoseRecursivePRM}
R.~San-Jos\'e.
\newblock A recursive construction for projective {R}eed-{M}uller codes.
\newblock {\em IEEE Trans. Inform. Theory}, 70(12):8511--8523, 2024.

\bibitem{aschMPC}
B.~van Asch.
\newblock Matrix-product codes over finite chain rings.
\newblock {\em Appl. Algebra Engrg. Comm. Comput.}, 19(1):39--49, 2008.

\bibitem{weiGHW}
V.~K. Wei.
\newblock Generalized {H}amming weights for linear codes.
\newblock {\em IEEE Trans. Inform. Theory}, 37(5):1412--1418, 1991.

\bibitem{yangGHWCyclic}
M.~Yang, J.~Li, K.~Feng, and D.~Lin.
\newblock Generalized {H}amming weights of irreducible cyclic codes.
\newblock {\em IEEE Trans. Inform. Theory}, 61(9):4905--4913, 2015.

\end{thebibliography}

\end{document}